\documentclass[11pt,twoside]{article}

\usepackage{mathtools} 
\usepackage{mathrsfs} 
\usepackage[symbol]{footmisc}

\newcommand{\DD}{\mathcal{D}}

\newcommand{\FF}{\mathcal{F}}
\newcommand{\HH}{\mathscr H}

\newcommand{\hh}{\mathfrak{h}}

\usepackage[utf8]{inputenc} 
\usepackage{amsmath,amssymb,amsthm}
\usepackage{pstricks,pst-node,pst-coil,pst-plot,pstricks-add}
\usepackage{geometry,epsfig}
\usepackage{bbm}
\usepackage{cite}

\usepackage{cite}
\usepackage{paralist}
\usepackage{cleveref}
\usepackage{enumitem}
\usepackage{emptypage}

\DeclareGraphicsExtensions{.pdf}

\usepackage{pgfplots}
\pgfplotsset{compat=1.18}
\usepackage{amsmath}

\newtheorem{lemma}{Lemma}[section]
\newtheorem{theorem}[lemma]{Theorem}
\newtheorem{corollary}[lemma]{Corollary}
\newtheorem{prop}[lemma]{Proposition}

\newcommand{\C}{\mathbb{C}} 
\newcommand{\R}{\mathbb{R}} 
\newcommand{\N}{\mathbb{N}} 

\newcommand{\Rea}{\operatorname{Re}}

\newcommand{\eps}{\varepsilon}

\newcommand{\supp}{\operatorname{supp}}

\newcommand{\sprod}[2]{\langle #1, #2\rangle}

\title{\textbf{Essential self-adjointness of semi-bounded operators}}

\author{M.~Griesemer\footnote{marcel.griesemer@mathematik.uni-stuttgart.de}\,\ and V.~Ku{\ss}maul\footnote{valentin.kussmaul@mathematik.uni-stuttgart.de}\\  
\small Fachbereich Mathematik, Universit\"at Stuttgart, D-70569 Stuttgart, Germany}  
\date{}

\begin{document}
\maketitle

\begin{abstract}
 A semi-bounded operator that is bounded below by one is essentially self-adjoint if the kernel of the adjoint is trivial. 
This is the case if the kernel of the adjoint belongs to the form domain of the Friedrichs extension.
We describe a local version of this criterion, where locality is defined in terms of bounded operators approaching the identity.
The result is an abstract operator theoretic framework that generalizes the method by Wienholtz and Simader developed in the context of semi-bounded elliptic partial differential operators. As applications, we obtain competitive results on essential self-adjointness of many-particle Schr\"odinger operators, pseudo-relativistic Hamiltonians, and the standard model of non-relativistic quantum electrodynamics.
\end{abstract}



\section{Introduction} \label{introduction}

Let $H:\DD\subset \HH\to \HH$ be a densely defined, symmetric operator that is bounded below in the Hilbert space $\HH$. We assume $H\ge 1$ so that essential self-adjointness becomes equivalent to $\ker(H^*)=\{0\}$.
Let $Q(H)$ denote the domain of the closure of the quadratic form defined by $H$. Then $\ker(H^{*})\cap Q(H) = \{0\}$, by \cite{AloSim}, and hence 
\begin{equation}\label{basic}
  \ker(H^*)\subset Q(H)
\end{equation}
implies $\ker(H^*)=\{0\}$. Below we describe an abstract framework in which a local version of \eqref{basic} is still equivalent to essential self-adjointness of $H$. This framework is well adapted to second order elliptic PDEs with locally bounded coefficients, but it is also applicable to certain pseudo-differential operators. We apply it to Schr\"odinger operators with and without magnetic field, to a pseudo-relativistic Hamiltonian and to the Pauli-Fierz Hamiltonian from non-relativistic QED. We thereby recover well-known results from the literature that were obtained previously by a variety of other methods \cite{Kato72,Falconi15,Hiroshima02,HH08}.

To describe the main ideas on which this work is based, we now present a simplified version of our main result, Theorem 2.1. Let $(\chi_n)$ be a sequence of self-adjoint bounded operators in $\HH$ with the following properties:
\begin{itemize}
\item[(a)] $0\le \chi_n \le 1$, $\chi_n = \chi_k\chi_n$ for $k\ge 2n$ and $\chi_n\to 1$ strongly as $n\to\infty$.
\item[(b)]  $\chi_n\DD\subset \DD$, and $\chi_n H\chi_n \le c_n H$ in form sense on $\DD$.
\item[(c)] $\chi_n \ker(H^{*})\subset Q(H)$.
\item[(d)]  $[\chi_n,[\chi_n,H]] : \DD\subset \HH\to\HH$ is a bounded operator and for all $u\in \ker(H^{*})$,
\begin{equation}\label{double-com}
   \liminf_{n\to\infty}\big|\sprod{u}{[\chi_n,[\chi_n,H]]u}\big| = 0.
\end{equation}
\end{itemize}

\begin{theorem}\label{thm1}
If Hypotheses (a)-(d) are satisfied, 
then $H$ is essentially self-adjoint on $\DD$.
\end{theorem}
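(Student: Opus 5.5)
Since $H\ge 1$, it suffices to show that every $u\in\ker(H^*)$ vanishes. Fix such a $u$ and set $u_n=\chi_n u$. By (c), $u_n\in Q(H)$. Let $\bar h$ be the closed form of $H$, with $\bar h(v)\ge\|v\|^2$ on $Q(H)$. The plan has three steps. First, express $\bar h(u_n)$ through the double commutator. Second, show that this forces $\liminf_n \bar h(u_n)=0$. Third, use $\chi_n\to 1$ strongly to conclude $u=0$.

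For the first step we formally have, for $\varphi\in\DD$,
\[
\sprod{\chi_n\varphi}{H\chi_n\varphi}=\Rea\sprod{\chi_n^2\varphi}{H\varphi}-\tfrac12\sprod{\varphi}{[\chi_n,[\chi_n,H]]\varphi}.
\]
This is the standard identity $\chi H\chi=\tfrac12(\chi^2H+H\chi^2)-\tfrac12[\chi,[\chi,H]]$, which is valid on $\DD$ because $\chi_n\DD\subset\DD$. We want to pass from $\varphi\in\DD$ to $u$.

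Since $H^*u=0$, the first term should vanish. The justification is the subtle part. By (b), $\chi_n H\chi_n\le c_nH$ on $\DD$. Hence if $\varphi_j\to v$ in the form norm with $\varphi_j\in\DD$, then $\chi_n\varphi_j$ is Cauchy in the form norm. Consequently $\chi_n$ maps $Q(H)$ into $Q(H)$, with $\bar h(\chi_n v)\le c_n\bar h(v)$.

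However, $u$ itself need not lie in $Q(H)$, so we cannot approximate $u$ in form norm. Instead I would work with the sesquilinear identity and use $\chi_n=\chi_k\chi_n$ for $k\ge 2n$. Pick $k\ge 2n$. Then $u_n=\chi_n\chi_k u=\chi_n u_k$ with $u_k\in Q(H)$, so $\bar h(u_n)=\bar h(\chi_n u_k)$.

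Now approximate $u_k$ in form norm by $\varphi_j\in\DD$. Apply the identity to $\varphi_j$ and let $j\to\infty$:
\begin{itemize}
\item The left side converges to $\bar h(\chi_n u_k)$ by the form-boundedness of $\chi_n$ established above.
\item The double-commutator term converges to $\sprod{u_k}{B_n u_k}$, where $B_n$ is the bounded closure of $[\chi_n,[\chi_n,H]]$ from (d).
\item The term $\sprod{\chi_n^2\varphi_j}{H\varphi_j}=\bar h(\chi_n^2\varphi_j,\varphi_j)$ converges to $\bar h(\chi_n^2u_k,u_k)$.
\end{itemize}

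It remains to show that $\bar h(\chi_n^2 u_k,u_k)=0$. Write $\chi_n^2u_k=\chi_n^2 u$, using $\chi_n\chi_k=\chi_n$ and self-adjointness. For $\psi\in\DD$ we have $\sprod{\chi_n^2\psi}{H\psi}$, and by the same locality relation the pairing of $u$ with $H$ applied to elements of $\chi_n\DD\subset\DD$ should reduce to $\sprod{H^*u}{\cdot}=0$. The aim is to prove $\bar h(\chi_n^2u,\chi_k v)=\sprod{u}{H\chi_n^2\chi_k v}$-type identities on $\DD$ and extend them by density.

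Once this works, it gives $\bar h(u_n)=-\tfrac12\sprod{u_k}{B_nu_k}$. This is also where I expect the main obstacle: carefully justifying the vanishing of the cross term using only $H^*u=0$ and the intertwining $\chi_n\chi_k=\chi_n$. A fallback is to replace $u_k$ by $u$ directly in the bounded term $B_n$, using $B_n\chi_k=B_n$ on the relevant range, which follows from $\chi_n\chi_k=\chi_n$.

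For the second and third steps, note that $\sprod{u_k}{B_nu_k}=\sprod{u}{B_nu}$, because $B_n$ only involves $\chi_n$ and $\chi_k$ acts as the identity next to it. Hence
\[
\|u_n\|^2\le\bar h(u_n)=\tfrac12\big|\sprod{u}{[\chi_n,[\chi_n,H]]u}\big|.
\]
Taking $\liminf$ and using \eqref{double-com} gives $\liminf_n\|\chi_nu\|=0$. Since $\chi_nu\to u$, we get $u=0$. Thus $\ker(H^*)=\{0\}$ and $H$ is essentially self-adjoint on $\DD$.
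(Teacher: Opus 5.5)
Your overall strategy (commutator identity, $H\ge 1$, then the $\liminf$) matches the paper's, and your first half is correct. Because (b) makes $\chi_n$ bounded on $Q(H)$ in the form norm, approximating $\chi_k u$ ($k\ge 2n$) in form norm by $\varphi_j\in\DD$ does give $\bar h(\chi_n u)=\Rea\bar h(\chi_n^2\chi_k u,\chi_k u)-\frac12\sprod{\chi_k u}{B_n\chi_k u}$.

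\textbf{The gap.} The step where $H^*u=0$ is supposed to enter is left open. The ideas you offer for it, and your final step, all rest on a locality property that is not among the hypotheses. The heuristic ``$\chi_n^2H\chi_k u=\chi_n^2 Hu=0$'', the fallback $B_n\chi_k=B_n$, and the claim $\sprod{\chi_k u}{B_n\chi_k u}=\sprod{u}{B_n u}$ all need $\chi_n H(1-\chi_k)=0$. From $\chi_n(1-\chi_k)=0$ you only get $[\chi_n,[\chi_n,H]](1-\chi_k)=\chi_n^2H(1-\chi_k)$ on $\DD$. This is nonzero for nonlocal $H$ such as $\sqrt{-\Delta+m^2}$ with spatial cutoffs, which is exactly the kind of operator the abstract theorem is meant to cover.

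Nor can the cross term be shown to vanish on its own. Pair $u$ with $\chi_n^2H\varphi_j$, use $\chi_n^2H=2\chi_nH\chi_n-H\chi_n^2+[\chi_n,[\chi_n,H]]$ on $\DD$ together with $\sprod{u}{H\chi_n^2\varphi_j}=0$, and let $j\to\infty$. This yields
\[
\bar h(\chi_n^2u,\chi_k u)=2\,\bar h(\chi_n u)+\sprod{u}{B_n\chi_k u}.
\]
So ``cross term $=0$'' is not an independent lemma. It is a statement about the unknown $\bar h(\chi_n u)$, i.e.\ about what you are trying to prove.

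\textbf{The missing idea.} As in the paper, approximate $u$ itself rather than $\chi_k u$: pick $w_k\in\DD$ with $\bar h(w_k-\chi_{2k}u)<k^{-2}$. Then $w_k\to u$ in $\HH$. Since $\chi_n=\chi_n\chi_{2k}$ for $k\ge n$, also $\chi_n w_k\to\chi_n u$ in form norm for each fixed $n$. Applying the commutator identity to $w_k$, the cross term equals $\sprod{w_k-u}{H\chi_n^2w_k}+\sprod{H^*u}{\chi_n^2w_k}$. The second piece is zero. The first tends to zero by the splitting $H\chi_n^2=\chi_{2n}H\chi_n^2+(1-\chi_{2n})[\chi_n,[\chi_n,H]]$; this is where $\chi_n=\chi_{2n}\chi_n$ replaces locality. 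Because $w_k\to u$ in $\HH$, the double-commutator term converges directly to $\sprod{u}{B_nu}$.

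\textbf{Alternative repair of your route.} Substituting the display into your identity gives $\bar h(\chi_n u)=-\Rea\sprod{u}{B_n\chi_k u}+\frac12\sprod{\chi_k u}{B_n\chi_k u}$. Now use $(1-\chi_k)B_n(1-\chi_k)=0$. For $\psi=(1-\chi_k)\varphi$ with $\varphi\in\DD$ one has $\chi_n\psi=0$, hence $\sprod{\psi}{B_n\psi}=\sprod{\chi_n^2\psi}{H\psi}=0$; conclude by density. Splitting $u=\chi_k u+(1-\chi_k)u$ then turns the right side into $-\frac12\sprod{u}{B_nu}$. Your two unjustified replacements introduce errors that exactly cancel, which is why your final inequality is nonetheless the correct one.
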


If we choose $\chi_n=1$ for all $n\in \N$, we see that (a), (b), and (d) are trivially satisfied and (c) reduces to \eqref{basic}. This again shows that \eqref{basic} implies essential self-adjointness of $H$.
Hypothesis (b) is a mild technical prerequisite, which is equivalent to $\chi_n Q(H)\subset Q(H)$. 
Hypotheses (c) and (d) are the main assumptions.  None of them can be dropped from the hypotheses of \Cref{thm1} as can be seen from the counterexamples in the appendix. 

Our method of proof is an operator theoretic version of a method due to Wienholtz and Simader that was invented
for proving essential self-adjointness of semi-bounded second order elliptic PDE's \cite{Wien, Simader78}. In our abstract setting,
the argument is as follows: We claim that
\begin{equation}\label{IMS-form}
      \sprod{\chi_n u }{H \chi_n u} = \Rea\sprod{H^{*}u}{\chi_n^2 u} - \frac{1}{2} \sprod{u}{[\chi_n,[\chi_n,H]]u},
\end{equation}
for all $u\in D(H^{*})$ with $\chi_n u\in Q(H)$ provided (a) and (b) are satisfied and $[\chi_n,[\chi_n,H]]$ is bounded.
\Cref{IMS-form} is a form version of the well-known commutator identity $2\chi_n H \chi_n = \chi_n^2 H + H\chi_n^2 - [\chi_n,[\chi_n,H]]$, which is the IMS-formula in the case of Schr\"odinger operators.
Hence if (a)-(d) hold and  $u\in \ker H^{*}$, then \eqref{IMS-form} in combination with $H\ge 1$ implies
\begin{equation}\label{IMS-ineq}
      \|\chi_n u\|^2 \le -\frac{1}{2} \sprod{u}{[\chi_n,[\chi_n,H]]u}.
\end{equation}
In view of $\chi_n u\to u$ and \eqref{double-com}, we conclude that $u=0$. This proves that $\ker H^{*} = \{0\}$ and hence that $H$ is essentially self-adjoint.

The advantage of this method over other methods is that only local conditions are imposed, besides the requirement that the operator is bounded below. In the case of semi-bounded Schr\"odinger operators, $H = -\Delta +V$ in $L^2(\R^d)$ with $V\in L^2_{\rm loc}(\R^d)$ and $d\le 3$, we may choose $\DD=C_0^{\infty}(\R^d)$ and $\chi_n$ a smooth space cutoff with growing support and vanishing derivative as $n\to\infty$. Then (a) holds by construction of $\chi_n$, (d) follows from
$$
     - [\chi_n,[\chi_n,H]] = |\nabla\chi_n|^2 \to 0\qquad (n\to\infty),
$$
and (b) is easily derived by an IMS localization argument using the partition of unity $\chi_n^2+\zeta_n^2=1$ where $\zeta_n:=\sqrt{1-\chi_n^2}$, see \Cref{abstract section}. For the proof of (c) we compare $H$ with truncated operators $H_n$ that are essentially self-adjoint on $\DD$. Then, by \Cref{thm1}, $H$ is essentially self-adjoint. More generally, if $H = -\Delta +V$ in $L^2(\R^{Nd})$ describes an $N$-particle system in $d\le 3$ space dimensions, then for essential self-adjointness it suffices that all two-body potentials belong to $L^2_{\rm loc}(\R^d)$ and that $H$ is bounded below, see \Cref{mag-schr-no-spin}. In the same vein \Cref{abstract theorem} gives essential self-adjointness for a large class of other semi-bounded Hamiltonians from quantum mechanics and quantum field theory.

In the case of semi-bounded elliptic partial differential operators of second order, essential self-adjointness was first derived from an inequality of the type \eqref{IMS-ineq} by Wienholtz \cite{Wien}. His ideas were later refined in \cite{Stetkaer-Hansen66, Simader78} and generalized to operators on manifolds in \cite{Shubin01,Grummt-Kolb}. Simplified expositions of the method can be found in \cite{Glazman, Borthwick, Helffer}. The argument above for essential self-adjointness of Schr\"odinger operators is due to Simader \cite{Simader78}.

This paper is organized as follows. In \Cref{abstract section} we describe and prove \Cref{abstract theorem}, a generalization of \Cref{thm1}, where the double commutator in Hypothesis (d), above, may be an unbounded operator. Section 3 contains our applications to Schr\"odinger operators, including pseudo-relativistic ones, and to Schr\"odinger operators with magnetic field and with spin. Section 4 is devoted to the essential self-adjointness of Pauli-Fierz Hamiltonians. \Cref{abstract theorem}, with 
$\chi_n$ being a number cutoff, allows us to recover the corresponding results in  \cite{Falconi15,Hiroshima02,HH08} for a slightly larger class of potentials.
There is an appendix containing auxiliary results and counterexamples showing that neither (c) nor (d) may be dropped in \Cref{thm1}.


\section{The abstract argument} 
\label{abstract section}

Let $\HH$ be a separable complex Hilbert space. Let $H:\DD\subset \HH\to \HH$ be densely defined, symmetric and bounded below. 
Let $H_0$ be self-adjoint in $\HH$, bounded below, and assume that $\DD$ is a form core for $H_0$. Upon adding a constant, we may assume that $H, H_0 \geq 1$. 
Let $(\chi_n)_{n \in \N}$ be a sequence of bounded operators in $\HH$ with the following properties: 
\begin{itemize}
\item[(a)] For all $n \in \N$, $0\le \chi_n \le 1$, $\chi_n = \chi_{2 k} \chi_n$ for $k\ge n$, and $\chi_n\to 1$ strongly as $n\to\infty$.
\item[(b)]  For all $n\in \N$, $\chi_n\DD\subset \DD$, and there exists $c_n > 0$ such that in form sense
\begin{align} \label{H_0 H_0 form bound}
	\chi_n H_0 \chi_n \leq c_n H_0 \quad \mathrm{on} \: \,  \DD.
\end{align}
\item[(c)] For all $n \in \N$, there exists $c_n > 0$ such that 
\begin{align}
	\chi_n H \chi_n &\leq c_n H_0 \quad \mathrm{on} \: \, \DD, \label{H H_0 form bound} \\ 
	\| [\chi_n, [\chi_n, H ]] \eta \|& \leq c_n (\| H_0^{1/2}\chi_{2 n}\eta\|+ \| \eta \|) \quad (\eta \in \DD). \label{double comm bound}
\end{align}
\item[(d)]  For all $n \in \N$, 
\begin{align} \label{local regularity}
	\chi_n \ker H^* \subset Q(H_0) 
\end{align}
and for all $u \in \ker H^*$ 
\begin{align} \label{double comm vanishes}
	\liminf_{n\to\infty}\big|\sprod{u}{\overline{[\chi_n,[\chi_n,H]]}u}\big| = 0.
\end{align}
\end{itemize}

Notice that the double commutator $[\chi_n, [\chi_n, H]]$ is symmetric on $\DD$ and hence closable. Conditions \eqref{double comm bound} and \eqref{local regularity} imply that \eqref{double comm vanishes} is well-defined for $u \in \ker H^*$. 

\begin{theorem} \label{abstract theorem}
Suppose the semi-bounded operators $H$ and $H_0$ introduced above satisfy Hypotheses (a) - (d). 
Then $H$ is essentially self-adjoint on $\DD$. 
\end{theorem}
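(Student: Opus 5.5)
Since $H\ge 1$, essential self-adjointness of $H$ on $\DD$ is equivalent to $\ker H^*=\{0\}$. So fix $u\in\ker H^*$; the plan is to establish the localized inequality
\begin{equation*}
\|\chi_n u\|^2 \le \sprod{\chi_n u}{\chi_n u}_{H} = -\tfrac12\sprod{u}{\overline{[\chi_n,[\chi_n,H]]}u},
\end{equation*}
where $\sprod{\cdot}{\cdot}_H$ denotes the closed form of $H$. Passing to the $\liminf$ with \eqref{double comm vanishes} and $\chi_n u\to u$ then gives $u=0$.

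\emph{Step 1: form domains.} By \eqref{H H_0 form bound}, $\|H^{1/2}\chi_n\eta\|^2\le c_n\|H_0^{1/2}\eta\|^2$ for $\eta\in\DD$, and $\DD$ is a form core for $H_0$. Hence the map $\eta\mapsto\chi_n\eta$ extends continuously from $Q(H_0)$ (with the $H_0$-form norm) into $Q(H)$. Together with \eqref{H_0 H_0 form bound} we get $\chi_n Q(H_0)\subset Q(H_0)$. Because $\chi_n=\chi_{2n}\chi_n$, \eqref{local regularity} gives $\chi_n u=\chi_n(\chi_{2n}u)$ with $\chi_{2n}u\in Q(H_0)$. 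Therefore $\chi_n u\in Q(H)$, and also $\chi_n^2u\in Q(H_0)$. Similarly, \eqref{double comm bound} shows that $\overline{[\chi_n,[\chi_n,H]]}$ is defined on all $\eta\in Q(H_0)$ with $\chi_{2n}\eta\in Q(H_0)$; in particular it is defined at $u$, since $\chi_{2n}u\in Q(H_0)$. To obtain approximants, pick $\eta_k\in\DD$ with $\eta_k\to\chi_{4n}u$ in $H_0$-form norm, which is possible by \eqref{local regularity}. Then $\chi_{2n}\eta_k\to\chi_{2n}u$ in $H_0$-form norm, and $\chi_n\eta_k\to\chi_n u$ in $H$-form norm.

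\emph{Step 2: the IMS identity \eqref{IMS-form}.} For $\eta\in\DD$ the algebraic identity
\begin{equation*}
\sprod{\chi_n\eta}{H\chi_n\eta}=\Rea\sprod{H\eta}{\chi_n^2\eta}-\tfrac12\sprod{\eta}{[\chi_n,[\chi_n,H]]\eta}
\end{equation*}
holds. Apply it to $\eta_k$. The left side converges to the form value $h[\chi_n u]$. For the double-commutator term, use \eqref{double comm bound} on differences, together with $\chi_{2n}\eta_k\to\chi_{2n}u$; since $\chi_n$ and the double commutator only involve the region where $\chi_{2n}$ acts, one needs to compare with $u$ itself, e.g.\ via $\chi_{2n}\chi_{4n}=\chi_{2n}$. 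The awkward term is $\sprod{H\eta_k}{\chi_n^2\eta_k}$, because $H\eta_k$ does not converge. I would rewrite it as the form $h(\eta_k,\chi_n^2\eta_k)$, which is legitimate since $\chi_n^2\eta_k\in\DD$.

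\emph{Main obstacle.} The main difficulty is justifying $h(\eta_k,\chi_n^2\eta_k)\to \sprod{H^*u}{\chi_n^2u}=0$ when $u$ itself is not in $Q(H)$. My approach is to move one $\chi_n$ across using the commutator: write $h(\eta,\chi_n^2\eta)=h(\chi_n\eta,\chi_n\eta)+\sprod{\eta}{[H,\chi_n]\chi_n\eta}$-type terms, so that only localized vectors $\chi_n\eta$, $\chi_{2n}\eta$ appear. I expect this to be controllable by the same bounds, treating $\eta_k$ and $u$ symmetrically. Alternatively, one can avoid the problem by working directly with $u$: for $\varphi\in\DD$, $\sprod{u}{H\varphi}=0$, and the closure argument gives $h(\psi,\chi_n^2u)$-type relations. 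Concretely, I would prove $\Rea\, h(\chi_n u,\chi_n u)=\Rea\sprod{u}{H\chi_n^2\eta_k}$-limits plus the double-commutator term. Using $\sprod{u}{H\varphi}=0$ for $\varphi=\chi_n^2\eta_k\in\DD$ kills the first term, while the remaining terms converge by Step 1. This yields \eqref{IMS-ineq} and hence the theorem.
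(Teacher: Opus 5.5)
Your overall strategy matches the paper's: a form version of the IMS identity for $u\in\ker H^*$, obtained by approximation from $\DD$, followed by $H\ge 1$ and the $\liminf$. The gap is in the step you call the main obstacle, which is where the actual proof lies, and your plan does not close it.

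\textbf{Your approximants make two claims fail.} You take $\eta_k\to\chi_{4n}u$ with $n$ fixed, so $\eta_k$ does not converge to $u$ in $\HH$. Write $K_n=[\chi_n,[\chi_n,H]]$. The bound \eqref{double comm bound} contains the global term $\|\eta\|$, and $H$ need not be local. Since $\chi_n(1-\chi_{4n})=0$, one has $K_n(1-\chi_{4n})=\chi_n^2H(1-\chi_{4n})$ on $\DD$, which is nonzero, e.g., for $f(-i\nabla)+V$ with spatial cutoffs.
\begin{itemize}
\item[(1)] Hence $\sprod{\eta_k}{K_n\eta_k}\to\sprod{\chi_{4n}u}{\overline{K_n}\chi_{4n}u}$. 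This differs from $\sprod{u}{\overline{K_n}u}$ by $2\Rea\sprod{(1-\chi_{4n})u}{\overline{K_n}\chi_{4n}u}$. Your remark that the double commutator ``only involves the region where $\chi_{2n}$ acts'' is a locality property that is not among (a)--(d).
\item[(2)] Likewise, $\sprod{\eta_k-u}{H\chi_n^2\eta_k}$ does not tend to $0$; it tends to $-\sprod{(1-\chi_{4n})u}{\overline{K_n}\chi_{4n}u}$. So using $\sprod{u}{H\chi_n^2\eta_k}=0$ does not make $\Rea\sprod{H\eta_k}{\chi_n^2\eta_k}$ tend to $\sprod{H^*u}{\chi_n^2u}=0$, as your plan requires.
\end{itemize}
The two discrepancies in fact cancel, but seeing that requires exactly the device your proposal lacks. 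Also, $\overline{K_n}$ is not defined at $u$ ``because'' of a $Q(H_0)$ property, since $u\notin Q(H_0)$ in general.

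\textbf{The missing idea is how to control $\sprod{u_k-u}{H\chi_n^2u_k}$ when $u_k-u\notin Q(H_0)$.} ``The remaining terms converge by Step 1'' is not an argument. Moving one $\chi_n$ across produces the single commutator $[H,\chi_n]$, on which (a)--(d) impose no bound, and it only reproduces the IMS identity.

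The paper uses the identity $H\chi_n^2=\chi_{2n}H\chi_n^2+(1-\chi_{2n})K_n$ on $\DD$, which follows from $\chi_{2n}\chi_n=\chi_n$. It splits the cross term into two pieces:
\begin{itemize}
\item[(1)] $\sprod{\chi_{2n}(u_k-u)}{H\chi_n^2u_k}$ tends to $0$ by \eqref{H H_0 form bound}, because $\chi_{2n}(u_k-u)\to 0$ in $Q(H_0)$ while $\chi_n^2u_k$ stays bounded there.
\item[(2)] $\sprod{(1-\chi_{2n})(u_k-u)}{K_nu_k}$ tends to $0$ because $u_k\to u$ in $\HH$ and $K_nu_k$ converges by \eqref{double comm bound}.
\end{itemize}
This needs approximants $u_k\in\DD$ with $u_k\to u$ in $\HH$ and $H_0^{1/2}\chi_mu_k\to H_0^{1/2}\chi_mu$ for every $m$. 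They are obtained by approximating $\chi_{2k}u$ in the $H_0$-form norm, with $k\to\infty$ rather than a fixed cutoff. One then uses $\chi_m=\chi_m\chi_{2k}$ for $k\ge m$ together with \eqref{H_0 H_0 form bound}.

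With such $u_k$, $\sprod{u_k}{H\chi_n^2u_k}\to\sprod{H^*u}{\chi_n^2u}=0$ and $\sprod{u_k}{K_nu_k}\to\sprod{u}{\overline{K_n}u}$. This yields the localized inequality you are aiming for.
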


\noindent
\emph{Remarks.} 
\begin{enumerate}
\item  \Cref{abstract theorem} implies \Cref{thm1}. To see this let $H_0$ be the Friedrichs extension of $H$. Then the simplified set of hypotheses in the introduction is obviously sufficient for (a)-(d) above.
\item The main improvement of \Cref{abstract theorem} over  \Cref{thm1} is that $[\chi_n, [\chi_n, H ]]$ may be unbounded, which is important in our application to QFT models. Furthermore, the freedom in the choice of $H_0$ simplifies the verification of the other assumptions. For example, if $H_0$ is known to be essentially self-adjoint on $\DD$, then \eqref{local regularity} easily follows from suitable operator bounds, c.f. \Cref{abstract lemma} with the choice $\hat{H} = H_0$. 
\end{enumerate}

\begin{proof}
Since $H \geq 1$ it suffices to show that $\ker H^* = \{0 \}$. Let $u \in \ker H^*$. By (a)-(c) and \eqref{local regularity}, \Cref{main lemma}, below, shows that for all $n \in \N$,
\begin{align}\label{un-bound}
	\| \chi_n u \|^2 \leq \frac{1}{2} |\sprod{u}{{\overline{[\chi_n,[\chi_n,H]]}u}}|.
\end{align}
Since $\chi_n u \to u$ as $n \to \infty$, it follows from \eqref{un-bound} and \eqref{double comm vanishes} that
\begin{align*}
	\| u \|^2 = \lim_{n \to \infty}\| \chi_n u \|^2 &\leq \liminf_{n \to \infty} \frac{1}{2} |\sprod{u}{{\overline{[\chi_n,[\chi_n,H]]}u}}| = 0. \qedhere
\end{align*}
\end{proof}
\begin{lemma} \label{approx lemma}
Assume (a) and (b). Suppose $u \in \HH$ with $\chi_n u \in Q(H_0)$ for all $n \in \N$. Then there exists a sequence $(u_k)_{k \in \N}$ in $\DD$ such that 
\begin{align} \label{approx 1}
	u_k \to u \quad (k \to \infty)
\end{align}
and, for all $n \in \N$,
\begin{align} \label{approx 2}
	H_0^{1/2}\chi_n u_k \to H_0^{1/2}\chi_n u \quad (k \to \infty).
\end{align}
\end{lemma}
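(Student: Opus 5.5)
The plan is a diagonal argument. I will approximate the localized vectors $\chi_{2k}u\in Q(H_0)$ in form norm by elements of $\DD$, and use the absorption property $\chi_n\chi_{2k}=\chi_n$ from (a) to transfer form-norm convergence to every fixed $\chi_n$.

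\emph{Step 1: $\chi_n$ acts boundedly on $Q(H_0)$ with its form norm.} By \eqref{H_0 H_0 form bound}, $\|H_0^{1/2}\chi_n\eta\|\le c_n^{1/2}\|H_0^{1/2}\eta\|$ for $\eta\in\DD$, and this makes sense because $\chi_n\DD\subset\DD$.

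Now take $\eta\in Q(H_0)$. Since $\DD$ is a form core, there are $\eta_j\in\DD$ with $\eta_j\to\eta$ in the form norm $\|H_0^{1/2}\cdot\|$, which dominates $\|\cdot\|$ because $H_0\ge1$.

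Then $(\chi_n\eta_j)$ is Cauchy in form norm and converges in $\HH$ to $\chi_n\eta$. Closedness of $H_0^{1/2}$ gives $\chi_n\eta\in Q(H_0)$. It also gives $H_0^{1/2}\chi_n\eta_j\to H_0^{1/2}\chi_n\eta$, and hence
\[
\|H_0^{1/2}\chi_n\eta\|\le c_n^{1/2}\|H_0^{1/2}\eta\|\qquad(\eta\in Q(H_0)).
\]

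\emph{Step 2: absorption.} From (a), $\chi_n=\chi_{2k}\chi_n$ for $k\ge n$. Taking adjoints, which is allowed since all $\chi_m$ are self-adjoint, gives $\chi_n=\chi_n\chi_{2k}$. Hence $\chi_n u=\chi_n(\chi_{2k}u)$ for all $k\ge n$.

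\emph{Step 3: diagonal choice.} For each $k$ set $C_k:=\max_{n\le k}c_n^{1/2}+1$. Since $\chi_{2k}u\in Q(H_0)$ by hypothesis and $\DD$ is a form core, choose $u_k\in\DD$ with
\[
\|H_0^{1/2}(u_k-\chi_{2k}u)\|\le \frac{1}{kC_k}.
\]

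For fixed $n$ and all $k\ge n$, Steps 1 and 2 give
\[
\|H_0^{1/2}(\chi_nu_k-\chi_nu)\|=\|H_0^{1/2}\chi_n(u_k-\chi_{2k}u)\|\le c_n^{1/2}/(kC_k)\le 1/k\to0,
\]
which is \eqref{approx 2}.

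Moreover, using $H_0\ge 1$,
\[
\|u_k-u\|\le\|H_0^{1/2}(u_k-\chi_{2k}u)\|+\|\chi_{2k}u-u\|\to0,
\]
because $\chi_{2k}\to1$ strongly. This is \eqref{approx 1}.

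The only delicate point is Step 1: hypothesis (b) gives the form bound only on $\DD$, and it must be extended to all of $Q(H_0)$. Without this extension, approximating $\chi_{2k}u$ would not control $\chi_n$ applied to the error. The closedness argument above handles this.
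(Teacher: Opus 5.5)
Your proof is correct and matches the paper's: you approximate $\chi_{2k}u$ in form norm by $u_k\in\DD$, extend the bound (b) to $Q(H_0)$, and use $\chi_n=\chi_n\chi_{2k}$ for $k\ge n$. The only differences are that you write out the closedness argument extending (b) to $Q(H_0)$, which the paper only asserts, and that your factor $C_k$ is harmless but unnecessary, since $c_n^{1/2}/k\to0$ already holds for each fixed $n$.
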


\begin{proof}
For each $k \in \N$, by assumption on $u$, $\chi_{2 k} u \in Q(H_0) = D(H_0^{1/2})$. Since $\DD$ is a form core of $H_0$, 
hence an operator core of $H_0^{1/2}$, there exists $u_k \in \DD$ with 
\begin{align} \label{def approx}
\|H_0^{1/2} (u_k - \chi_{2 k} u) \|< \frac{1}{k},
\end{align}
and in particular, $u_k - \chi_{2k}u \to 0$. In view of $\chi_{2k}u \to u$ we conclude that \eqref{approx 1} holds.
For the proof of \eqref{approx 2} we first notice that \eqref{H_0 H_0 form bound} extends to $Q(H_0)$.
For $k \geq n$ we get from $\chi_n = \chi_n \chi_{2 k}$, \eqref{H_0 H_0 form bound} and  \eqref{def approx} combined,
\begin{align*}
	\|H_0^{1/2}\chi_n u_k - H_0^{1/2}\chi_n u \| &= \| H_0^{1/2}\chi_n (u_k - \chi_{2 k} u) \| \\
	&\leq c_n^{1/2}\|H_0^{1/2}(u_k - \chi_{2 k}u)\| \to 0 \quad (k \to \infty). \qedhere
\end{align*}
\end{proof}

\begin{lemma} \label{main lemma}
Assume (a)-(c) and suppose $u \in D(H^*)$ with $\chi_n u \in Q(H_0)$ for all $n \in \N$.  Then 
\begin{align} \label{double comm ineq}
	\| \chi_n u \|^2 \leq \mathrm{Re}\sprod{H^*u}{\chi_n^2 u} -\frac{1}{2} \sprod{u}{{\overline{[\chi_n,[\chi_n,H]]}u}}.
\end{align}
\end{lemma}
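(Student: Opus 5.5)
The plan is to prove the identity first for vectors in $\DD$, and then pass to $u$ by the approximating sequence of \Cref{approx lemma}. For $\eta\in\DD$ we have $\chi_n\eta,\chi_n^2\eta\in\DD$ by (b). Expanding the double commutator gives the algebraic identity
\begin{align*}
	2\chi_n H\chi_n = \chi_n^2 H + H\chi_n^2 - [\chi_n,[\chi_n,H]] \quad \text{on } \DD.
\end{align*}
Taking expectations and using the symmetry of $H$ on $\DD$, this yields
\begin{align*}
	\sprod{\chi_n\eta}{H\chi_n\eta} = \Rea\sprod{H\eta}{\chi_n^2\eta} - \tfrac12\sprod{\eta}{[\chi_n,[\chi_n,H]]\eta}.
\end{align*}
Since $H\ge 1$, the left side is at least $\|\chi_n\eta\|^2$. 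So the inequality \eqref{double comm ineq} holds for every $\eta\in\DD$.

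Next take $u\in D(H^*)$ as in the statement, and let $u_k\in\DD$ be the sequence from \Cref{approx lemma}. Thus $u_k\to u$ and $H_0^{1/2}\chi_m u_k\to H_0^{1/2}\chi_m u$ for every $m$. I apply the inequality to $\eta=u_k$ and want to let $k\to\infty$ in each term. The left side $\|\chi_n u_k\|^2$ converges to $\|\chi_n u\|^2$ because $\chi_n$ is bounded.

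The term $\sprod{Hu_k}{\chi_n^2u_k}$ is the delicate one, because $Hu_k$ need not converge. I rewrite it as $\sprod{Hu_k}{\chi_n^2u_k}=\sprod{H\chi_n u_k}{\chi_n u_k}+\sprod{[\chi_n,H]u_k}{\chi_n u_k}$, but this reintroduces a form term. A cleaner route is as follows.
\begin{itemize}
\item Note that $\chi_n^2u_k\in\DD$, so $\sprod{Hu_k}{\chi_n^2u_k}=\sprod{u_k}{H\chi_n^2u_k}$.
\item Split $\sprod{u_k}{H\chi_n^2 u_k}=\sprod{u}{H\chi_n^2u_k}+\sprod{u_k-u}{H\chi_n^2u_k}$.
\item For the first piece, $u\in D(H^*)$ gives $\sprod{u}{H\chi_n^2u_k}=\sprod{H^*u}{\chi_n^2u_k}\to\sprod{H^*u}{\chi_n^2u}$.
\end{itemize}
This leaves the remainder $\sprod{u_k-u}{H\chi_n^2u_k}$. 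To control it, I would rather avoid it altogether by a more symmetric approximation. The idea is to write the left side as the form $h[\chi_n u_k]$ of $H$ and show that it converges, using the $H_0$-bound \eqref{H H_0 form bound}. That bound makes the closed form of $H$ continuous in $H_0^{1/2}$-norm on the vectors $\chi_n\,\cdot$, since $\chi_n u_k=\chi_n\chi_{2n}u_k$. Hence $h[\chi_n u_k]\to h[\chi_n u]$ by \Cref{approx lemma}.

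Concretely, here is the order of steps I would follow.
\begin{enumerate}
\item Establish the identity above, with equality, on $\DD$.
\item Show that the form $h$ of $H$ extends to $\chi_n$-localized vectors, and that $h[\chi_n u_k]\to h[\chi_n u]\ge\|\chi_n u\|^2$. This uses the polarized version of \eqref{H H_0 form bound} applied to $\chi_{2n}(u_k-u_l)$-type differences.
\item Show that the double commutator term converges: by \eqref{double comm bound}, $[\chi_n,[\chi_n,H]]u_k\to\overline{[\chi_n,[\chi_n,H]]}u$, since $H_0^{1/2}\chi_{2n}u_k$ and $u_k$ converge.
\item Conclude that $\Rea\sprod{Hu_k}{\chi_n^2u_k}$ converges, being a difference of convergent terms. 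Its limit must then be identified with $\Rea\sprod{H^*u}{\chi_n^2u}$.
\end{enumerate}

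The main obstacle is step 4. Writing $\sprod{Hu_k}{\chi_n^2u_k}=\sprod{u_k}{H\chi_n^2u_k}$, I would express $H\chi_n^2u_k$ through $\chi_n H\chi_n u_k$ plus single commutators. I would then bound the commutator $[\chi_n,H]\chi_n u_k$ in terms of $H_0^{1/2}$-norms of $\chi_{2n}u_k$ by a polarization argument. Failing that, I would pair $\sprod{u-u_k}{H\chi_n^2 u_k}$ in the form sense: by \eqref{H H_0 form bound} it is bounded by $c_n\|H_0^{1/2}\chi_{2n}(u-u_k)\|\,\|H_0^{1/2}\chi_{2n}u_k\|$-type expressions, which tend to $0$.
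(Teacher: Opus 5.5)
\textbf{There is a genuine gap, at the step you yourself flag as the main obstacle.} Your setup matches the paper's: the same approximating sequence, the same convergence $[\chi_n,[\chi_n,H]]u_k\to\overline{[\chi_n,[\chi_n,H]]}u$, and the same splitting $\sprod{u_k}{H\chi_n^2u_k}=\sprod{H^*u}{\chi_n^2u_k}+\sprod{u_k-u}{H\chi_n^2u_k}$. What is missing is a proof that the remainder $\sprod{u_k-u}{H\chi_n^2u_k}$ tends to zero.

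Your fallback claim, that \eqref{H H_0 form bound} bounds it by $\|H_0^{1/2}\chi_{2n}(u-u_k)\|$-type terms, does not hold. That hypothesis only controls $\sprod{\chi_m x}{H\chi_m y}$, with a cutoff on \emph{both} sides. The vector $u_k-u$ is not localized and has no $H_0^{1/2}$-regularity, since $u$ is only locally in $Q(H_0)$.

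Split $u_k-u=\chi_{2n}(u_k-u)+(1-\chi_{2n})(u_k-u)$. The near part works as you suggest. Insert $\chi_{4n}$, using $\chi_{2n}=\chi_{4n}\chi_{2n}$ and $\chi_n=\chi_{4n}\chi_n$. Then apply Cauchy--Schwarz to the nonnegative form $\sprod{\chi_{4n}\cdot}{H\chi_{4n}\cdot}$, together with \eqref{approx 2}. The far part $\sprod{(1-\chi_{2n})(u_k-u)}{H\chi_n^2u_k}$ is not controlled by anything you wrote:
\begin{itemize}
\item $(1-\chi_{2n})H\chi_n^2\neq0$ for nonlocal $H$, e.g.\ $f(-i\nabla)$.
\item $\|H\chi_n^2u_k\|$ need not stay bounded.
\item Your first option relies on a bound on the single commutator $[\chi_n,H]$, which is not among the hypotheses; only the double commutator is controlled.
\end{itemize}

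\textbf{The missing idea is an operator identity on $\DD$:}
\[
(1-\chi_{2n})H\chi_n^2=(1-\chi_{2n})[\chi_n,[\chi_n,H]].
\]
It follows from $H\chi_n^2=2\chi_nH\chi_n-\chi_n^2H+[\chi_n,[\chi_n,H]]$ together with $(1-\chi_{2n})\chi_n=0$. With it, the far part becomes $\sprod{(1-\chi_{2n})(u_k-u)}{[\chi_n,[\chi_n,H]]u_k}$. This tends to zero because $u_k\to u$ and $[\chi_n,[\chi_n,H]]u_k$ converges, which is exactly your step 3.

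This is how the paper closes the argument. Once it is in place, the inequality on $\DD$ passes to the limit term by term. Your step 2, the convergence of $\sprod{\chi_nu_k}{H\chi_nu_k}$ to the closed form, is then unnecessary.
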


\begin{proof}
Let $(u_k)$ in $\DD$ be the sequence given by \Cref{approx lemma}, which approximates
$u$ in the sense \eqref{approx 1} and \eqref{approx 2}. Then $(H_0^{1/2}\chi_{2n}u_k)_{k\in \N}$ is a Cauchy sequence and hence
we conclude, using \eqref{double comm bound}, that $([\chi_n, [\chi_n, H ]] u_k)_{k \in \N}$ is a Cauchy sequence as well. 
Since $u_k \to u$ as $k \to \infty$ it follows that for all $n \in \N$
\begin{align} \label{double comm conv}
[\chi_n, [\chi_n, H ]] u_k \to \overline{[\chi_n, [\chi_n, H ]]} u \quad (k \to \infty).
\end{align}
As a further preparation, we remark that, for all $n \in \N$,
\begin{align} \label{prod vanishes}
\sprod{u_k}{H \chi_n^2 u_k} \to \sprod{H^* u}{\chi_n^2 u} \quad (k \to \infty),
\end{align}
which we prove below. Now, the commutator identity
\begin{align} \label{IMS-formula}
 \chi_n H \chi_n = \frac{1}{2} (H \chi_n^2 + \chi_n^2 H) - \frac{1}{2} [\chi_n, [\chi_n, H]]
\end{align}
combined with $H\ge 1$ shows that
\begin{align*}
\| \chi_n u_k \|^2 &\leq \sprod{\chi_n u_k}{H \chi_n u_k} \\ 
&= \mathrm{Re} \sprod{u_k}{H \chi_n^2 u_k} - \frac{1}{2} \sprod{u_k}{[\chi_n,[\chi_n,H]]u_k}.
\end{align*}
In view of \eqref{approx 1}, \eqref{double comm conv} and \eqref{prod vanishes}, the desired inequality \eqref{double comm ineq} follows in the limit $k \to \infty$. 

It remains to verify \eqref{prod vanishes}. We have
\begin{align*}
	\sprod{u_k}{H \chi_n^2 u_k} - \sprod{H^* u}{\chi_n^2 u} &= \sprod{u_k -u}{H \chi_n^2 u_k} + \sprod{H^* u}{\chi_n^2 (u_k - u)}.
\end{align*}
By \eqref{approx 1}, the second term on the right-hand side vanishes as $k \to \infty$. Concerning the first term we use \eqref{IMS-formula} in the form $H \chi_n^2 = 2 \chi_n H \chi_n - \chi_n^2 H + [\chi_n, [\chi_n, H]]$. Since $\chi_n = \chi_{2n} \chi_n$ it follows that
\begin{align*}
H \chi_n^2 = \chi_{2 n}H \chi_n^2 + (1 - \chi_{2n}) [\chi_n,[\chi_n, H]]
\end{align*}
and therefore
\begin{align*}
\lefteqn{\sprod{u_k - u}{H \chi_n^2 u_k}} \\ 
&= \sprod{\chi_{2n} (u_k - u)}{H \chi_n^2 u_k} + \sprod{(1-\chi_{2n})(u_k - u)}{ [\chi_n,[\chi_n, H]] u_k}.
\end{align*}
The first term on the right-hand side vanishes as $k \to \infty$ because of the form bound \eqref{H H_0 form bound} and the convergence \eqref{approx 2}. The second term vanishes because of \eqref{approx 1} and \eqref{double comm conv}. 
\end{proof}

The following proposition is our tool for verifying condition \eqref{local regularity} of Hypothesis (d), above, or Hypothesis (c) in \Cref{introduction}.

\begin{prop} \label{abstract lemma}
Suppose $H:\DD\subset \HH\to\HH$ is symmetric and bounded below. Let $\chi$ be a bounded self-adjoint operator with $\chi \, \DD \subset \DD$. Let $\hat{H}:\DD\subset \HH\to\HH$ be essentially self-adjoint and bounded below. 
If there exist $c, d > 0$ such that for all $\eta \in \DD$
\begin{align}
	\| (H - \hat{H}) \, \chi \, \eta \|^2 &\leq c \sprod{\eta}{\hat{H} \eta} + d \| \eta \|^2, \label{relative bound 1}\\ 
	\| [\chi, \hat{H}] \eta \|^2 &\leq c \sprod{\eta}{\hat{H} \eta} + d \|\eta \|^2, \label{relative bound 2}
\end{align}
then $\chi \, D(H^{*})\subset Q(\hat{H})$.
\end{prop}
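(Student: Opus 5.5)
The plan is a duality argument. Let $\overline{\hat H}$ be the closure of $\hat H$, which is self-adjoint. After adding a constant $a$ we may assume $\overline{\hat H}\ge 1$. This shift is harmless for the hypotheses. The commutator $[\chi,\hat H+a]=[\chi,\hat H]$ is unchanged. The difference $(H-\hat H-a)\chi\eta$ differs from $(H-\hat H)\chi\eta$ by $a\chi\eta$, which is bounded by $|a|\,\|\eta\|$. So \eqref{relative bound 1} and \eqref{relative bound 2} persist with new constants, and since now $\sprod{\eta}{\hat H\eta}\ge\|\eta\|^2$, both right-hand sides are bounded by $C\|\hat H^{1/2}\eta\|^2$ for $\eta\in\DD$. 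Fix $u\in D(H^*)$. The goal is to show that $\chi u\in D(\overline{\hat H}^{1/2})=Q(\hat H)$.

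\emph{Step 1 (the key estimate).} For $\eta\in\DD$ we use $\chi\DD\subset\DD$ and the symmetry relation $\sprod{u}{H\chi\eta}=\sprod{H^*u}{\chi\eta}$ to write
\begin{align*}
\sprod{\chi u}{\hat H\eta} &= \sprod{u}{\hat H\chi\eta}+\sprod{u}{[\chi,\hat H]\eta}\\
&= \sprod{H^*u}{\chi\eta}-\sprod{u}{(H-\hat H)\chi\eta}+\sprod{u}{[\chi,\hat H]\eta}.
\end{align*}
By \eqref{relative bound 1}, \eqref{relative bound 2} and $\|\eta\|\le\|\hat H^{1/2}\eta\|$, this gives
\[
|\sprod{\chi u}{\hat H\eta}|\le C_u\,\|\overline{\hat H}^{1/2}\eta\|\qquad(\eta\in\DD),
\]
where $C_u$ depends on $\|u\|$ and $\|H^*u\|$.

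\emph{Step 2 (from the estimate to $\chi u\in Q(\hat H)$).} Consider the subspace $M:=\overline{\hat H}^{1/2}\DD$. On $M$ the functional $\xi\mapsto\sprod{\chi u}{\overline{\hat H}^{1/2}\xi}$ is bounded by $C_u\|\xi\|$, because for $\xi=\overline{\hat H}^{1/2}\eta$ we have $\overline{\hat H}^{1/2}\xi=\hat H\eta$. If $M$ is a core for $\overline{\hat H}^{1/2}$, the bound extends to all $\xi\in D(\overline{\hat H}^{1/2})$. Hence $\chi u\in D((\overline{\hat H}^{1/2})^*)=D(\overline{\hat H}^{1/2})$ by self-adjointness, which is the claim.

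\emph{Step 3 (the core property, which is the main technical point).} Let $\xi\in D(\overline{\hat H}^{1/2})$. Then $\zeta:=\overline{\hat H}^{-1/2}\xi$ lies in $D(\overline{\hat H})$. Since $\DD$ is an operator core for $\overline{\hat H}$, there are $\eta_k\in\DD$ with $\eta_k\to\zeta$ and $\hat H\eta_k\to\overline{\hat H}\zeta$. Applying the bounded operator $\overline{\hat H}^{-1/2}$ to the second convergence yields $\overline{\hat H}^{1/2}\eta_k\to\overline{\hat H}^{1/2}\zeta=\xi$. The first convergence, rewritten, is $\overline{\hat H}^{1/2}(\overline{\hat H}^{1/2}\eta_k)=\hat H\eta_k\to\overline{\hat H}\zeta=\overline{\hat H}^{1/2}\xi$. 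Thus $\xi_k:=\overline{\hat H}^{1/2}\eta_k\in M$ converges to $\xi$ in the graph norm of $\overline{\hat H}^{1/2}$. This shows that $M$ is a core and completes the argument.

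I expect this core/density step to be the only delicate point. The remaining work is bookkeeping with the constants $c$, $d$ and the shift $a$.
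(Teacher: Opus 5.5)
Your proposal is correct and is essentially the paper's proof: you use the same commutator identity for $\sprod{\chi u}{\hat H\eta}$ and the same bound by a multiple of $\|\overline{\hat H}^{1/2}\eta\|$ coming from the two relative bounds. The closing step differs only in phrasing. You conclude $\chi u\in D\bigl((\overline{\hat H}^{1/2})^*\bigr)$ using that $\overline{\hat H}^{1/2}\DD$ is a core for $\overline{\hat H}^{1/2}$, whereas the paper applies the Riesz representation theorem in $Q(\hat H)$ and then uses density of $\hat H\DD$ in $\HH$; your core property is equivalent to that density fact, which you prove explicitly while the paper merely asserts it.
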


\begin{proof}
Without loss of generality, we assume $\hat{H} \geq 1$.  The form domain $Q(\hat{H})$ endowed with the inner product 
$v, w \mapsto \sprod{\hat{H}^{1/2} v}{\hat{H}^{1/2} w}$ is a Hilbert space and $\DD\subset Q(\hat{H})$ is dense. 
For $u\in D(H^{*})$ and $\eta \in  \DD$ we have
\begin{align}
\sprod{\chi \, u}{\hat{H}\eta} &= \sprod{u}{\chi \,  \hat{H} \eta} \nonumber\\
&= \sprod{u}{[\chi, \hat{H}]\eta} + \sprod{u}{\hat{H} \, \chi \, \eta} \nonumber \\
&= \sprod{u}{[\chi, \hat{H}]\eta} + \sprod{H^{*}u}{\chi \,  \eta} - \sprod{u}{(H - \hat{H})\chi \,  \eta}. \label{functional in Q(H_0)}
\end{align}
We regard the expression on the right of \eqref{functional in Q(H_0)} as a linear functional of $\eta\in \DD$, which is densely defined in $Q(\hat{H})$.
In view of \eqref{relative bound 1}, \eqref{relative bound 2} this linear function is bounded in the Hilbert space $Q(\hat{H})$. So there exists a vector
$w\in Q(\hat{H})$ such that 
$$\sprod{\chi \, u}{\hat{H} \eta} = \sprod{\hat{H}^{1/2} w}{\hat{H}^{1/2} \eta} = \sprod{w}{\hat{H} \eta}$$
or, equivalently, $\sprod{\chi \,  u - w}{\hat{H} \eta} = 0$ for all $\eta \in \DD$. Since $\hat{H} \geq 1$ and since $\DD$ is a core of $\hat{H}$, it follows that $\hat{H} \DD$ is dense in $\HH$, and hence $\chi \,  u = w\in Q(\hat{H})$.
\end{proof}

The following result is a corollary of \Cref{thm1} and \Cref{abstract lemma}. 

\begin{corollary} \label{abstract corollary}
Let $H: \DD \subset \HH \to \HH$ be symmetric and bounded below. Suppose the following is true:
\begin{enumerate}
\item[(i)] There exist bounded operators $\chi_n$ and $\zeta_n$ in $\HH$ satisfying $0 \leq \chi_n, \zeta_n \leq 1$, $\chi_n^2 + \zeta_n^2 = 1$, $\chi_n = \chi_{2 k} \chi_n$ for $k\geq n$ and $\chi_n \to 1$ strongly as $n \to \infty$.
\item[(ii)] The operators $\chi_n$ and $\zeta_n$ leave $\DD$ invariant, $[\chi_n, [\chi_n, H]]$ and $[\zeta_n, [\zeta_n, H]]$ extend from $\DD$ to bounded operators in $\HH$, and $\| [\chi_n, [\chi_n, H]] \| \to 0$ as $n \to \infty$.
\item[(iii)] For each $n \in \N$ there exists an operators $H_n: \DD \subset \HH \to \HH$  that is essentially self-adjoint and bounded below, s.t. on $\DD$ we have $(H - H_n) \chi_n = 0$, $[\zeta_n, [\zeta_n, H_n]]$ is bounded, and for some constants $c_n, d_n > 0$
\begin{align} \label{first-comm-relative-bound}
	\| [\chi_n, H_{2n}] \eta \|^2 \leq c_n \sprod{\eta}{H_{2n} \eta} + d_n \| \eta \|^2 \quad (\eta \in \DD).
\end{align}
\end{enumerate}
Then $H$ is essentially self-adjoint on $\DD$. 
\end{corollary}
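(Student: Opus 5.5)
The plan is to check that the hypotheses (a)--(d) of \Cref{thm1} hold for the given sequence $(\chi_n)$ and then invoke \Cref{thm1}. Throughout we add a constant so that $H\ge 1$; this changes neither the commutators nor the hypotheses. Hypothesis (a) of \Cref{thm1} is contained in (i), since the relation $\chi_n=\chi_{2k}\chi_n$ for $k\ge n$ gives $\chi_n=\chi_k\chi_n$ for all even $k\ge 2n$. (If all $k\ge 2n$ are needed, one relabels or notes that only this property is used in the proof.) Hypothesis (d) is immediate from (ii): the double commutator $[\chi_n,[\chi_n,H]]$ is bounded, and $|\sprod{u}{[\chi_n,[\chi_n,H]]u}|\le \|[\chi_n,[\chi_n,H]]\|\,\|u\|^2\to 0$.

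For (b), I would use the IMS identity \eqref{IMS-formula} for both $\chi_n$ and $\zeta_n$. Adding the two identities and using $\chi_n^2+\zeta_n^2=1$ gives, on $\DD$,
\begin{align*}
 \chi_n H\chi_n = H - \zeta_n H\zeta_n - \tfrac12\big([\chi_n,[\chi_n,H]]+[\zeta_n,[\zeta_n,H]]\big).
\end{align*}
Since $\zeta_n\DD\subset\DD$ and $H\ge 1$, we have $\zeta_n H\zeta_n\ge 0$. The double commutators are bounded by (ii), so
$\chi_nH\chi_n\le H+C_n\le (1+C_n)H$ on $\DD$.

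The main step is (c), namely $\chi_n\ker H^*\subset Q(H)$. First I would apply \Cref{abstract lemma} with $\chi=\chi_n$ and $\hat H=H_{2n}$. Because $\chi_n=\chi_{2n}\chi_n$, $\chi_n\DD\subset\DD$ and $(H-H_{2n})\chi_{2n}=0$ by (iii), we get $(H-H_{2n})\chi_n=0$ on $\DD$. Thus \eqref{relative bound 1} holds trivially, and \eqref{relative bound 2} is exactly \eqref{first-comm-relative-bound}. Hence $v:=\chi_n u\in Q(H_{2n})$ for every $u\in D(H^*)$.

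It remains to transfer the statement from $Q(H_{2n})$ to $Q(H)$. Choose $\eta_k\in\DD$ with $\eta_k\to v$ in the form norm of $H_{2n}$, and consider $\chi_{2n}\eta_k\in\DD$. These vectors converge to $\chi_{2n}v=v$ in $\HH$. Moreover,
$\sprod{\chi_{2n}\eta}{H\chi_{2n}\eta}=\sprod{\chi_{2n}\eta}{H_{2n}\chi_{2n}\eta}$ because $(H-H_{2n})\chi_{2n}=0$.
So it suffices to show $\chi_{2n}H_{2n}\chi_{2n}\le C H_{2n}+C'$ in form sense. Then $(\chi_{2n}\eta_k)$ is Cauchy in the form norm of $H$, and closedness of the form of $H$ yields $v\in Q(H)$.

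For this bound I would repeat the IMS argument for $H_{2n}$ with the pair $\chi_{2n},\zeta_{2n}$. Here $\zeta_{2n}H_{2n}\zeta_{2n}$ is bounded below, and $[\zeta_{2n},[\zeta_{2n},H_{2n}]]$ is bounded by (iii). The delicate point, which I expect to be the main obstacle, is the double commutator $[\chi_{2n},[\chi_{2n},H_{2n}]]$. The plan is to show that it equals $[\chi_{2n},[\chi_{2n},H]]$ in form sense on $\DD$, and is therefore bounded by (ii). Indeed, $H_{2n}\chi_{2n}=H\chi_{2n}$ and $H_{2n}\chi_{2n}^2=H\chi_{2n}^2$ on $\DD$, so $\chi_{2n}H_{2n}\chi_{2n}=\chi_{2n}H\chi_{2n}$. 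By symmetry,
$\sprod{\eta}{\chi_{2n}^2H_{2n}\eta}=\sprod{H_{2n}\chi_{2n}^2\eta}{\eta}=\sprod{H\chi_{2n}^2\eta}{\eta}=\sprod{\eta}{\chi_{2n}^2H\eta}$.
Hence all three terms of the double commutators agree as quadratic forms. With (a)--(d) verified, \Cref{thm1} gives essential self-adjointness of $H$ on $\DD$.
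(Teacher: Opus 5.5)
Your proposal is correct and follows the paper's proof essentially step for step. You verify (a)--(d) of \Cref{thm1}, obtain (b) from the IMS localization identity with $\chi_n^2+\zeta_n^2=1$, and obtain (c) by applying \Cref{abstract lemma} with $\chi=\chi_n$, $\hat H=H_{2n}$, then transferring via $\chi_{2n}Q(H_{2n})\subset Q(H)$ using the bound $\chi_{2n}H_{2n}\chi_{2n}\le H_{2n}+C_n$. You also correctly fill in details the paper leaves implicit: the form-sense identity $[\chi_{2n},[\chi_{2n},H_{2n}]]=[\chi_{2n},[\chi_{2n},H]]$ on $\DD$, the Cauchy-sequence closure argument for the form domain, and the even-index subtlety in hypothesis (a).
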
 

\begin{proof}
We apply \Cref{thm1}. Without loss of generality, we assume $H \geq 1$. First, we show that conditions (a), (b) and (d) of \Cref{thm1} follow from assumptions (i) and (ii). For (a) and (d) this is obvious. For (b) we use the algebraic identity 
\begin{align}
	H &= \frac{1}{2}\left((\chi_n^2 + \zeta_n^2) H + H (\chi_n^2 + \zeta_n^2) \right) \nonumber \\ 
	&= \chi_n H \chi_n + \zeta_n H \zeta_n + [\chi_n, [\chi_n, H]] + [\zeta_n, [\zeta_n, H]]. \label{IMS-localization}
\end{align}
Since $\zeta_n H \zeta_n \geq \zeta_n^2 \geq 0$ and both commutators are bounded, the form bound (b) follows. 

Second, we derive (c) from (ii) and (iii): From $$(H - H_{2n}) \chi_n = (H - H_{2n}) \chi_{2n} \chi_n = 0$$ and assumption \eqref{first-comm-relative-bound}
we see that \eqref{relative bound 1} and  \eqref{relative bound 2} of  \Cref{abstract lemma} are satisfied for the choices $\chi = \chi_n$ and $\hat{H} = H_{2n}$. This yields
\begin{align} \label{technical-inclusion-H-2n}
\chi_n D(H^*) \subset Q(H_{2n}).
\end{align}
Now it remains to show that 
\begin{equation}\label{technical-inclusion-H}
   \chi_{2n} Q(H_{2n}) \subset Q(H).
\end{equation}
From \eqref{technical-inclusion-H-2n}, \eqref{technical-inclusion-H} and $\chi_{2n} \chi_n =\chi_n$ it follows that 
$\chi_n D(H^*) \subset Q(H)$, which implies (c). 

To prove \eqref{technical-inclusion-H} we argue as above and use that, by (ii) and (iii) the commutators $[\chi_{2n}, [\chi_{2n}, H_{2n}]] = [\chi_{2n}, [\chi_{2n}, H]]$ and $[\zeta_{2n}, [\zeta_{2n}, H_{2n}]]$ are bounded.
Since, moreover, $H_{2n}$ is bounded below we conclude
\begin{align*}
	H_{2n} &=  \chi_{2n} H_{2n} \chi_{2n} + \zeta_{2n} H_{2n} \zeta_{2n} + [\chi_{2n}, [\chi_{2n}, H_{2n}]] + [\zeta_{2n}, [\zeta_{2n}, H_{2n}]] \\ 
	&\geq \chi_{2n} H_{2n} \chi_{2n} - C_n\\ 
	&= \chi_{2n} H \chi_{2n} - C_n,
\end{align*}
where $(H - H_{2n}) \chi_{2n} = 0$ was used in the last line. This bound implies \eqref{technical-inclusion-H}, which ends the proof.
\end{proof}


\section{Schr\"odinger operators}
 \label{sec:schrodinger}

In this section, we apply \Cref{abstract corollary} to Schr\"odinger operators with magnetic field, possibly including spin and statistics, and to pseudo-relativistic Hamiltonians with confining potentials. It is clear that, with the same methods, larger classes of elliptic differential and pseudo-differential operators could be treated.

The localization operators $\chi_n$ and $\zeta_n$ are constructed by choosing smooth functions $\chi, \zeta: \R \to [0,1]$ with $\chi^2 + \zeta^2 = 1$ and $\chi(r) = 1$ for $r \leq 1$ and $\chi(r) = 0$ for $r \geq 2$. We then define for $x \in \R^d$ (or $x \in \R^{d N}$)
\begin{align} \label{spatial-cutoff}
\chi_n(x) = \chi(|x|/n), \quad  \zeta_n(x) = \zeta(|x|/n).
\end{align}

\subsection{Magnetic Schr\"odinger operators} \label{sec:mag-schr}

The following theorem improves on a wide variety of results on essential self-adjointness of semi-bounded magnetic Schr\"odinger operators in flat space \cite{Shubin01,Stetkaer-Hansen66,Simader78,Wien}.
 For background, we recall that a semi-bounded Schr\"odinger operator with potential $V \in L_\mathrm{loc}^2(\R^d)$ need not be essentially self-adjoint on $C_0^\infty(\R^d)$, see Example 4 in Chapter X.2 of \cite{RS75}.
 
\begin{theorem} \label{LS-generalized}
Let $d \geq 1$. Suppose $A \in L_\mathrm{loc}^4(\R^d; \, \R^d)$ with $\mathrm{div}A \in L^2_\mathrm{loc}(\R^d)$ in the distributional sense. Let $V \in L_\mathrm{loc}^2(\R^d)$ be real-valued. Let $V_- = \max(-V, 0)$ and assume that for compact $K \subset \R^d$ the operator $V_- \chi_K$ is $\Delta$-bounded with relative bound less than one. If
\begin{align*}
	H = - \Delta - 2i A \cdot \nabla -i \mathrm{div}A + A^2 + V \quad \text{in} \: \, L^2(\R^d)
\end{align*}
is bounded below on $C_0^\infty(\R^d)$, then it is essentially self-adjoint on $C_0^\infty(\R^d)$.
\end{theorem}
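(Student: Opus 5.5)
We apply \Cref{abstract corollary} with $\DD = C_0^\infty(\R^d)$ and the spatial cutoffs $\chi_n,\zeta_n$ from \eqref{spatial-cutoff}. Since $A\in L^4_{\rm loc}$, $\mathrm{div}A\in L^2_{\rm loc}$ and $V\in L^2_{\rm loc}$, the operator $H$ maps $\DD$ into $L^2(\R^d)$; it is symmetric and, by assumption, bounded below.

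Hypothesis (i) holds by construction: $\chi_n^2+\zeta_n^2=1$, $\chi_n\to 1$ strongly, and $\chi_{2k}=1$ on $\supp\chi_n\subset\{|x|\le 2n\}$ for $k\ge n$, so $\chi_n=\chi_{2k}\chi_n$. For (ii), multiplication by smooth bounded functions preserves $\DD$. For a real smooth bounded $f$ with bounded derivatives, the first-order magnetic terms and the multiplication terms commute with $f$, so only the Laplacian contributes:
\[
[f,[f,H]] = -2|\nabla f|^2 .
\]
(The factor depends on the sign convention; in any case it is a multiple of $|\nabla f|^2$.) Hence both double commutators are bounded, and $\|[\chi_n,[\chi_n,H]]\|\le C\|\chi'\|_\infty^2/n^2\to 0$.

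For (iii) we truncate the coefficients. Fix a cutoff $\theta_n\in C_0^\infty$ with $\theta_n=1$ on $\{|x|\le 2n\}$ and set $A_n=\theta_nA$ and $V_n=\mathbb{1}_{\{|x|\le 4n\}}V$. We take $H_n$ to be the magnetic Schr\"odinger operator with coefficients $A_n,V_n$. Then $(H-H_n)\chi_n=0$ on $\DD$, and $[\zeta_n,[\zeta_n,H_n]]=-2|\nabla\zeta_n|^2$ is bounded. The coefficients $A_n\in L^4$, $\mathrm{div}A_n\in L^2$ and $V_n\in L^2$ now have compact support.

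The main obstacle is showing that each $H_n$ is essentially self-adjoint on $\DD$ and bounded below, without a dimension restriction (for $d\ge 4$, $L^2$ functions need not be $\Delta$-bounded). I would handle it by splitting
\[
H_n=\bigl(-\Delta-2iA_n\cdot\nabla-i\,\mathrm{div}A_n+A_n^2+V_{n,+}\bigr)-V_{n,-}.
\]
The first part is nonnegative on $\DD$, since it equals $(-i\nabla+A_n)^2+V_{n,+}$ in form sense. For its essential self-adjointness I would appeal to the Leinfelder--Simader theorem, which requires exactly $A\in L^4_{\rm loc}$, $\mathrm{div}A\in L^2_{\rm loc}$ and $V_+\in L^2_{\rm loc}$; this is presumably why the theorem carries those hypotheses. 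The term $V_{n,-}=V_-\mathbb{1}_K$ is $\Delta$-bounded with relative bound $<1$ by assumption. I would then need to transfer this to a relative bound with respect to the positive magnetic part, via diamagnetic or Kato-type inequalities. Failing that, I would use a Kato-inequality argument directly on $H_n^*u=-\lambda u$. After this, Kato--Rellich gives essential self-adjointness and semiboundedness of $H_n$. This transfer of the relative bound is the step I expect to be the most delicate.

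Finally, for \eqref{first-comm-relative-bound} we compute
\[
[\chi_n,H_{2n}]=2(\nabla\chi_n)\cdot\nabla+\Delta\chi_n+2i(\nabla\chi_n)\cdot A_{2n}.
\]
We then bound $\|(\nabla\chi_n)\cdot(-i\nabla+A_{2n})\eta\|^2\le C\sprod{\eta}{(-i\nabla+A_{2n})^2\eta}$. The form of $(-i\nabla+A_{2n})^2$ is in turn bounded by $\sprod{\eta}{H_{2n}\eta}$ plus the form of $V_{2n,-}$. Since $V_{2n,-}$ is $\Delta$-bounded, it is also infinitesimally form bounded relative to $-\Delta$, and hence relative to the magnetic form via the diamagnetic inequality. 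This gives \eqref{first-comm-relative-bound} with suitable $c_n,d_n$, and \Cref{abstract corollary} then yields the claim.
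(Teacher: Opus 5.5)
Your outline matches the paper's proof: \Cref{abstract corollary} with the spatial cutoffs, comparison operators $H_n$ that coincide with $H$ on $\supp\chi_n$, Leinfelder--Simader for the essential self-adjointness of $H_n$, and \eqref{first-comm-relative-bound} from $[\chi_n,H_{2n}]=2i(\nabla\chi_n)\cdot(-i\nabla+A)+\Delta\chi_n$ together with the diamagnetic inequality. Your value $[f,[f,H]]=-2|\nabla f|^2$ is correct, though the constant is irrelevant. One small correction there: $-2iA\cdot\nabla$ does not commute with $f$. Its commutator $2iA\cdot\nabla f$ does, and that is all you need. Two further points need correcting.

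\textbf{The step you call the main obstacle is not actually open.} Theorem 3 of Leinfelder--Simader, in the form the paper uses it, already allows a negative part that is globally $\Delta$-bounded with relative bound $<1$. By hypothesis, $V_-$ times the indicator of a ball is of exactly this type. So the paper simply sets $H_n=H_0-V_-\chi_{2n}$, where $H_0$ is the magnetic operator with the full $A$ and $V_+$, and cites that theorem for essential self-adjointness and semiboundedness. No truncation of $A$ or $V_+$ is needed.

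If you want to keep your Kato--Rellich route, the transfer has to go through resolvent domination. The pointwise inequality $|\nabla|\eta||\le|(-i\nabla+A)\eta|$ gives no operator bound. Let $T_n$ be the closure of $(-i\nabla+A_n)^2+V_{n,+}$. Then $|(T_n+\lambda)^{-1}f|\le(-\Delta+\lambda)^{-1}|f|$ pointwise. Hence $\|V_{n,-}(T_n+\lambda)^{-1}\|\le\|V_{n,-}(-\Delta+\lambda)^{-1}\|\le a+b/\lambda<1$ for large $\lambda$, and Kato--Rellich applies.

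\textbf{The claim ``$\Delta$-bounded, hence infinitesimally form bounded'' is false.} Take $\eps|x|^{-2}$ in $d\ge 5$, the example in the paper's third remark. By Rellich's inequality it is $\Delta$-bounded with small bound. By scaling, however, its form bound relative to $-\Delta$ is strictly positive.

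You only need a relative form bound $<1$, and the operator bound gives this via the Heinz inequality: $\sprod{\eta}{V_{2n,-}\eta}\le a'\|\nabla\eta\|^2+b'\|\eta\|^2$ with $a'<1$. Apply this to $|\eta|$ and use the diamagnetic inequality. This yields $(1-a')\|(-i\nabla+A_{2n})\eta\|^2\le\sprod{\eta}{H_{2n}\eta}+b'\|\eta\|^2$, which gives \eqref{first-comm-relative-bound}. This is exactly the paper's argument.
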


\noindent
\emph{Remarks.}
\begin{enumerate}
\item The quadratic form of $H$ on $C_0^\infty(\R^d)$ is given by 
\begin{equation}\label{H-form}
     \|(-i\nabla + A)\eta\|^2 +\sprod{\eta}{V\eta}, 
\end{equation}
which, by the diamagnetic inequality, is bounded below by $\|\nabla|\eta|\|^2 +\sprod{|\eta|}{V|\eta|}$. 
\item If $V$ is continuous, or satisfies a local Stummel-condition, and $A$ is sufficiently smooth, the result is known from \cite{Povzner, Wien,Stetkaer-Hansen66}. 
\item Our local condition on $V_-$ is satisfied if $d \leq 3$, see \Cref{Sobolev} (i). Sufficient conditions in higher dimensions are $V_- \in L^p_\mathrm{loc}(\R^d)$ with $p > 2$ if $d = 4$ and $p = d/2$ if $d \geq 5$. Under these local integrability conditions on $V_-$ and if $A \in C^1(\R^d; \, \R^d)$, \Cref{LS-generalized} is due to \cite{Shubin01}. Another comparable result can be found in \cite{Grummt-Kolb}, where essential self-adjointness of magnetic Sch\"odinger operators on Riemannian manifolds is established. There the condition on $V_{-}$ is expressed in terms of the local Kato-class, which implies that $V_-$ is locally infinitesimally \textit{form} bounded w.r.t. $-\Delta$, while \Cref{LS-generalized} assumes that $V_{-}$  is locally \textit{operator} bounded w.r.t. $-\Delta$ with relative bound less than one. In dimensions $d \geq 5$ this describes a different class of potentials. For example, the potential $V_{-}(x) = \eps|x|^{-2}$ is not in the local Kato-class for any $\eps > 0$ but it is locally operator bounded with respect to $\Delta$ with bound less than one for sufficiently small $\eps$.
\item If we assume that $V_-$ is \emph{globally} (rather than locally) $\Delta$-bounded with relative bound less than one, then $-\Delta+V$ is bounded below and hence, by the first remark, $H$ is bounded below.
In this case the result agrees with Theorem 3 in \cite{LS81}. The point of \Cref{LS-generalized} is that $H$ can be bounded below while $-\Delta+V$ is not \cite{Shubin01}. 
\item Our proof of \Cref{LS-generalized} is based on \Cref{abstract corollary} in combination with Theorem 3 in \cite{LS81}. 
\end{enumerate} 

\begin{proof}
We apply \Cref{abstract corollary} with $\chi_n$ defined by \eqref{spatial-cutoff}. By the assumption on $A$ and $V$, $H$ is well-defined and symmetric on 
$\DD = C_0^\infty(\R^{d})$. Condition (i) is obviously satisfied and (ii) follows from the fact that for smooth functions $\varphi: \R^d \to \R$, 
\begin{align} \label{double-comm-schr}
	[\varphi, [\varphi, H]] = - |\nabla \varphi|^2 \quad \text{on} \: \, \DD.
\end{align}
Concerning (iii), we define
\begin{align*}
	H_0 &=  - \Delta - 2i A \cdot \nabla -i \mathrm{div}A + A^2 + V_+ \\ 
	H_n &= H_0 - V_- \chi_{2n} \quad (n \in \N)
\end{align*}
where $V_\pm = \mathrm{max}(\pm V, 0)$. By Theorem 3 in \cite{LS81}, $H_n$ is essentially self-adjoint and bounded below on $\DD$. From $H-H_n=V_{-}(\chi_{2n}-1)$ and $\chi_n = \chi_{2n}\chi_n$ 
it follows that $(H - H_n) \chi_n = 0$ on $\DD$.
The bound \eqref{first-comm-relative-bound} for the operator $H_{2n}$ is the only non-obvious condition remaining. We have
\begin{align*}
	[\chi_n, H_{2n}] = 2 i (\nabla \chi_n) \cdot (-i \nabla + A) + (\Delta \chi_n)
\end{align*}
which is bounded w.r.t. the form norm of $H_0$, c.f. \eqref{H-form}. Since the form norms of $H_0$ and $H_{2n}$ are equivalent, \eqref{first-comm-relative-bound} follows. To see the equivalence of the form norms notice that $V_- \chi_{4n}$ is form bounded w.r.t. $H_0$ with relative bound less than one. This follows from the diamagnetic inequality $|\nabla |\eta|| \leq |(\nabla + i A)\eta|$ for $\eta \in \DD$ and from the fact that $V_- \chi_{4n}$ is operator- hence form-bounded w.r.t. $-\Delta$ with relative bound less than one.
\end{proof}


The following theorem spells out the hypotheses of  \Cref{LS-generalized} in the case of a many-body Hamiltonian in $L^2(\R^{Nd})$ with $d \leq 3$. It follows from \Cref{LS-generalized}, but 
the many-body structure and the restriction $d\leq 3$, which is legitimate, physically, allow us to give a self-contained proof based on \Cref{abstract corollary}. 

\begin{theorem} \label{mag-schr-no-spin}
Let $d \leq 3$, $N \geq 1$, and moreover:
\begin{enumerate}
\item Let $v,w \in L_\mathrm{loc}^2(\R^d)$ be real-valued and 
\begin{align*}
V(x) = \sum_{j = 1}^N v(x_j) + \sum_{j < k} w(x_j - x_k) \quad (x \in \R^{dN}).
\end{align*}
\item Let $a \in L_\mathrm{loc}^4(\R^d; \, \R^d)$ with $\mathrm{div} \, a \in L_\mathrm{loc}^2(\R^d)$ in the distributional sense and 
\begin{align*}
	A(x) = (a(x_1), \dots a(x_N)) \quad  (x \in \R^{dN}). 
\end{align*}
\end{enumerate}
If the operator
\begin{align} \label{mag-schr}
	H = - \Delta - 2i A \cdot \nabla -i \mathrm{div}A + A^2 + V \quad \mathrm{in} \: \, L^2(\R^{d N})
\end{align}
is bounded below on $C_0^\infty(\R^{d N})$, then $H$ is essentially self-adjoint on $C_0^\infty(\R^{d N})$.
\end{theorem}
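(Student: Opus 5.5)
We apply \Cref{abstract corollary} with $\DD = C_0^\infty(\R^{dN})$ and the spatial cutoffs $\chi_n,\zeta_n$ from \eqref{spatial-cutoff} on $\R^{dN}$. For $d\le 3$ we have $L^2_{\rm loc}\subset$ the local Kato/Stummel-type class for which multiplication by $v(x_j)$ and $w(x_j-x_k)$ is well defined on $\DD$. Likewise $a\in L^4_{\rm loc}$ and ${\rm div}\,a\in L^2_{\rm loc}$ make $H$ well defined and symmetric on $\DD$. Assumption (i) holds by construction, and (ii) follows from the identity \eqref{double-comm-schr}, which gives $[\chi_n,[\chi_n,H]]=-|\nabla\chi_n|^2$ with sup-norm $O(n^{-2})$; similarly for $\zeta_n$. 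So the work is in (iii).

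For (iii), define the truncated potentials $v_n = v\,\mathbf 1_{\{|y|\le 4n\}}$ and $w_n = w\,\mathbf 1_{\{|y|\le 8n\}}$, both in $L^2(\R^d)$. Set
\[
H_n = -\Delta - 2iA_n\cdot\nabla - i\,{\rm div}A_n + A_n^2 + V_n ,
\]
where $V_n$ and $A_n$ are built from $v_n$, $w_n$ and from $a_n = a\,\theta_n$, with $\theta_n$ a smooth cutoff equal to $1$ on $\{|y|\le 4n\}$. On $\supp\chi_n\subset\{|x|\le 2n\}$ every $|x_j|\le 2n$ and $|x_j-x_k|\le 4n$, so the coefficients agree there and $(H-H_n)\chi_n=0$ on $\DD$.

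Next we show $H_n$ is essentially self-adjoint and bounded below. Since $d\le 3$, each $v_n(x_j)$ and $w_n(x_j-x_k)$ is infinitesimally $\Delta$-bounded on $L^2(\R^{dN})$: this is the standard Kato argument, applying \Cref{Sobolev}~(i) in the relevant $d$-dimensional variable or relative coordinate and then tensoring. For the magnetic part, $a_n\in L^4$ and ${\rm div}\,a_n\in L^2$ are globally integrable. So $A_n\cdot\nabla$, $A_n^2$ and ${\rm div}A_n$ are infinitesimally $\Delta$-bounded, using $\|a_n\nabla_j\eta\|\le \|a_n\|_4\|\nabla_j\eta\|_{L^4}$ together with Sobolev in $d\le 3$ variables. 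Kato--Rellich then shows $H_n$ is self-adjoint on $H^2$, essentially self-adjoint on $\DD$, and bounded below. Boundedness of $[\zeta_n,[\zeta_n,H_n]]$ again follows from \eqref{double-comm-schr}.

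It remains to prove \eqref{first-comm-relative-bound}. We compute
\[
[\chi_n,H_{2n}] = 2i(\nabla\chi_n)\cdot(-i\nabla+A_{2n}) + \Delta\chi_n ,
\]
so its square norm is bounded by $C\bigl(\|(-i\nabla+A_{2n})\eta\|^2+\|\eta\|^2\bigr)$. By the form version of the relative bounds above, this is at most $C'\bigl(\sprod{\eta}{H_{2n}\eta}+\|\eta\|^2\bigr)$: the potential $V_{2n}$ is infinitesimally form bounded relative to $-\Delta$, hence relative to $(-i\nabla+A_{2n})^2$ by the diamagnetic inequality. \Cref{abstract corollary} then yields the claim.

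The main obstacle is making the infinitesimal relative bounds rigorous in the many-body setting, especially for the magnetic cross term $A_n\cdot\nabla$ with only $L^4$ coefficients. If operator boundedness of that term turns out to be delicate, the fallback is to apply Theorem~3 of \cite{LS81} to $H_n$ directly, as in the proof of \Cref{LS-generalized}. That route uses only that $V_{n,-}$ is $\Delta$-bounded with relative bound less than one, which holds globally since $V_n$ is a sum of $L^2$ few-body terms with $d\le 3$.
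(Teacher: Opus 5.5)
Your proposal is correct and follows essentially the paper's proof: you apply \Cref{abstract corollary} with the spatial cutoffs, take truncated comparison operators $H_n$ that are essentially self-adjoint by Kato--Rellich and \Cref{Sobolev}, and bound the same first-commutator formula for $[\chi_n,H_{2n}]$. The differences are cosmetic. You truncate $v$, $w$, $a$ individually instead of multiplying $A$ and $V$ by $\chi_{2n}$ on $\R^{dN}$, which lets \Cref{Sobolev} apply verbatim. You also compare form norms through the diamagnetic inequality rather than through the equivalence of the form norms of $-\Delta$ and $H_{2n}$.
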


\noindent
\emph{Remark.} 
By the first remark following \Cref{LS-generalized}, $H$ is bounded below if $-\Delta +V$ is bounded below, which is the case 
if the negative parts of $v,w$ belong to $L^2(\R^d) + L^\infty(\R^d)$. This follows from the Kato-Rellich theorem and \Cref{Sobolev} (i).

 \begin{proof}
We use \Cref{abstract corollary} with cutoff functions $\chi_n$ defined by \eqref{spatial-cutoff}. 
By the assumptions on $A$ and $V$, 
$H$ is well-defined and symmetric on $\DD=C_0^\infty(\R^{dN})$.
Condition (i) is obviously satisfied and (ii) follows from \eqref{double-comm-schr}. As for (iii) we define
\begin{equation} \label{mag-schr-no-spin-short-aux}
          H_n = -\Delta - 2i A_n \cdot \nabla -i \mathrm{div} A_n + A_n^2 + V_n 
\end{equation}
where $A_n = A \chi_{2n}$ and $V_n = V \chi_{2n}$. Like $H$, $H_n$ is well-defined and symmetric on $\DD$.
To show that $H_n$ is self-adjoint on $D(-\Delta)$, essentially self-adjoint on $\DD$, and bounded below, it suffices, by Kato-Rellich, to verify that the operators $A_n^2, V_n, \mathrm{div} A_n$ and $A_n \cdot \nabla$ are infinitesimally $\Delta$-bounded. Since $\chi_{2n}$ is compactly supported, the operators $A_n^2, V_n$ and 
\begin{align} \label{div-product-rule-A-n}
\mathrm{div} A_n = (\nabla \chi_{2n}) \cdot A + \chi_{2n} \,  \mathrm{div} A
\end{align}
can be estimated in terms of sums of functions which are infinitesimally $\Delta$-bounded by \Cref{Sobolev} (i). Similarly, each component of $A_n$ is infinitesimally $\nabla$-bounded by \Cref{Sobolev} (ii), so $A_n \cdot \nabla$ is infinitesimally $\Delta$-bounded. From \eqref{mag-schr-no-spin-short-aux}, \eqref{div-product-rule-A-n}, $\chi_n = \chi_{2n}\chi_n$ and $(\nabla \chi_{2n})\chi_n = 0$ it follows that  $(H - H_n) \chi_n = 0$. By \eqref{double-comm-schr}, $[\zeta_n, [\zeta_n, H_n]] = - |\nabla \zeta_n|^2$. It remains to verify \eqref{first-comm-relative-bound}. We have
\begin{align} \label{first-comm-schr-aux}
[\chi_n, H_{2n}] = 2 i (\nabla \chi_n) \cdot (-i \nabla + A_{2n}) + (\Delta \chi_n).
\end{align}
Since $A_{2n}$ is $\nabla$-bounded, \eqref{first-comm-schr-aux} is bounded w.r.t. the form norm of $-\Delta$. Since $H_{2n}$ is a small perturbation of $-\Delta$, the form norms of $-\Delta$ and $H_{2n}$ are equivalent and hence \eqref{first-comm-schr-aux} is form-bounded w.r.t.  $H_{2n}$.
In summary, all conditions of \Cref{abstract corollary} are satisfied and the theorem is thus proven.
 \end{proof}

The next theorem is a variant of \Cref{mag-schr-no-spin}  including spin and statistics. We use the identification
$$
	\otimes^N (L^2(\R^3) \otimes \C^2) \simeq L^2(\R^{3N}; \, \otimes^N \C^2)
$$
and denote the anti-symmetric subspace by 
$$
	\otimes_\mathrm{a}^N (L^2(\R^3) \otimes \C^2) \simeq L_\mathrm{a}^2(\R^{3 N}; \, \otimes^N \C^2).
$$
Let $C_{0, \mathrm{a}}^\infty(\R^{3N}; \, \otimes^N \C^2)$ be the subspace of anti-symmetric test functions. 

\begin{theorem}
Let $N \geq 1$ and moreover:
\begin{enumerate}
\item Let $v,w \in L_\mathrm{loc}^2(\R^3)$ be real-valued with $w(y) = w(-y)$ for a.e. $y \in \R^3$ and 
\begin{align*}
V(x) = \sum_{j = 1}^N v(x_j) + \sum_{j < k} w(x_j - x_k) \quad (x \in \R^{3N}).
\end{align*}
\item Let $a \in L_\mathrm{loc}^4(\R^3; \, \R^3)$ with $\mathrm{div} \, a \in L_\mathrm{loc}^2(\R^3)$ and
\begin{align*}
	A(x) = (a(x_1), \dots a(x_N)) \quad  (x \in \R^{3N}). 
\end{align*}
\item Let $\sigma_j$ denote a triple of self-adjoint matrices in $\C^{2 \times 2}$ acting on the $j$th factor in $\otimes^N \C^2$, let $b \in L_\mathrm{loc}^2(\R^3; \, \R^3)$ and 
\begin{align*}
	\sigma &= (\sigma_1, \dots, \sigma_N), \\ 
	B(x) &= (b(x_1), \dots, b(x_N)) \quad (x \in \R^{3 N}).
\end{align*}
\end{enumerate}
If the operator
\begin{align}\label{mag-schr-spin}
	H =- \Delta - 2i A \cdot \nabla -i \mathrm{div}A + A^2 + V + \sigma \cdot B \quad \mathrm{in} \: \, L_\mathrm{a}^2(\R^{3 N}; \, \otimes^N \C^2)
\end{align}
is bounded below on $C_{0, \mathrm{a}}^\infty(\R^{3N}; \otimes^N \C^2)$, then $H$ is essentially self-adjoint on $C_{0, \mathrm{a}}^\infty(\R^{3N}; \otimes^N \C^2)$. 
\end{theorem}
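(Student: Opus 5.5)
We follow the proof of \Cref{mag-schr-no-spin} almost verbatim and apply \Cref{abstract corollary} in the Hilbert space $\HH = L_\mathrm{a}^2(\R^{3N}; \otimes^N \C^2)$ with $\DD = C_{0,\mathrm{a}}^\infty(\R^{3N}; \otimes^N\C^2)$. The cutoffs $\chi_n,\zeta_n$ from \eqref{spatial-cutoff} are multiplication by functions of $|x|$, hence invariant under permutations of the particle coordinates. They therefore commute with the antisymmetrization and leave $\DD$ invariant. Since $V$ is symmetric ($w$ is even), $A$ and $B$ have product structure, and $\sigma\cdot B=\sum_j \sigma_j\cdot b(x_j)$ is permutation invariant (the permutation also acts on the spin factors), $H$ maps $\DD$ into antisymmetric $L^2$-functions. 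Local integrability of $v,w,a,\operatorname{div}a,b$ makes $H$ well defined and symmetric on $\DD$. Condition (i) holds by construction. Condition (ii) follows from \eqref{double-comm-schr}: the matrix-valued term $\sigma\cdot B$ is a multiplication operator and commutes with $\chi_n$, so $[\varphi,[\varphi,H]] = -|\nabla\varphi|^2$ still holds.

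For (iii) we set
\[
H_n = -\Delta - 2iA_n\cdot\nabla - i\operatorname{div}A_n + A_n^2 + V_n + \sigma\cdot B_n ,
\]
where $A_n = A\chi_{2n}$, $V_n = V\chi_{2n}$ and $B_n = B\chi_{2n}$. All three are still symmetric under particle exchange, so $H_n$ acts in $\HH$. The new term is controlled as follows. Each entry of $\sigma\cdot B_n$ is bounded by $C\sum_j |b(x_j)|\chi_{2n}(x)$. On the support of $\chi_{2n}$ we have $|x_j|\le 4n$, so this is dominated by $C\sum_j |b(x_j)|\mathbbm{1}_{\{|x_j|\le 4n\}}$, where each summand depends on one coordinate and is an $L^2(\R^3)$ function. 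By \Cref{Sobolev} (i), such a function is infinitesimally $\Delta$-bounded in $L^2(\R^{3N};\otimes^N\C^2)$. The remaining terms are handled exactly as in \Cref{mag-schr-no-spin}. Kato--Rellich then shows that $H_n$ is self-adjoint on $D(\Delta)$, essentially self-adjoint on the full test-function space, and bounded below.

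We must transfer this to the antisymmetric subspace. $H_n$ commutes with the unitary permutation operators, so the antisymmetrizer $P_\mathrm{a}$ commutes with the self-adjoint operator $H_n$ and maps $C_0^\infty$ onto $\DD$. Applying $P_\mathrm{a}$ to a core of the full operator gives a core of the reduced part, so the restriction to $\HH$ is essentially self-adjoint on $\DD$. The relation $(H-H_n)\chi_n=0$ follows as before from $\chi_{2n}\chi_n=\chi_n$ and $(\nabla\chi_{2n})\chi_n = 0$. The double commutator is $[\zeta_n,[\zeta_n,H_n]]=-|\nabla\zeta_n|^2$. The first commutator $[\chi_n,H_{2n}]$ is given by the same expression \eqref{first-comm-schr-aux}, since the spin term commutes with $\chi_n$. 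Because $H_{2n}$ is a small perturbation of $-\Delta$, this commutator is bounded in terms of the form norm of $H_{2n}$, which gives \eqref{first-comm-relative-bound}.

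The only step that is not routine is the reduction to the antisymmetric subspace. This includes checking that every operator appearing in \Cref{abstract corollary} preserves antisymmetry and that essential self-adjointness survives restriction to the fermionic subspace. The analytic estimates are direct copies of those in \Cref{mag-schr-no-spin}, with the Zeeman term treated like the potential $V$.
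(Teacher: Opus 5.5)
Your proposal is correct and follows the paper's own argument. The paper's proof just says to repeat the proof of \Cref{mag-schr-no-spin} via \Cref{abstract corollary}, treating $\sigma\cdot B$ like $V$. Your additional checks fill in details the paper leaves implicit: that the cutoffs, $H$ and $H_n$ commute with particle permutations, and that essential self-adjointness of $H_n$ on $C_0^\infty$ passes to $C_{0,\mathrm{a}}^\infty$ via the antisymmetrizer.
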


\noindent
\emph{Remark.} 
By the first remark following \Cref{LS-generalized} and \Cref{Sobolev} (i), a sufficient condition for $H$ to be bounded below is that each component of $b$ is in $L^2(\R^3) + L^\infty(\R^3)$ and that the negative parts of $v$ and $w$ are in $L^2(\R^3) + L^\infty(\R^3)$.
\begin{proof}
The proof is a copy of the proof of \Cref{mag-schr-no-spin}. The term $\sigma \cdot B$ is treated like $V$. 
\end{proof}

\subsection{Pseudo-relativistic Schr\"odinger operators} \label{sec:semi-relativistic}

We now turn to essential self-adjointness for a class of pseudo-differential operators including 
pseudo-relativistic Schr\"odinger operators with confining potentials. For preparation we need:

\begin{lemma} \label{fourier commutator}
Let $\chi \in C_0^\infty(\R^d)$. Let $f: \R^d \to \R$ be Lipschitz continuous with constant $L$. Then $[\chi, f(-i \nabla)]$ extends from Schwartz space to a bounded operator in $L^2(\R^d)$ with
\begin{align*} 
	\|[\chi, f(-i \nabla)]\| \leq \frac{L}{(2\pi)^{d/2}} \int_{\R^d} |\widehat{\nabla \chi}(k)|dk.
\end{align*}
\end{lemma}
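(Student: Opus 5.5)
Work in Fourier space, where $f(-i\nabla)$ is multiplication by $f(p)$ and multiplication by $\chi$ is convolution with $(2\pi)^{-d/2}\hat\chi$. I use the unitary convention $\hat\varphi(p)=(2\pi)^{-d/2}\int e^{-ipx}\varphi(x)\,dx$, so that $\widehat{\chi\varphi}=(2\pi)^{-d/2}\,\hat\chi*\hat\varphi$. For a Schwartz function $\varphi$ this gives
\begin{align*}
\widehat{[\chi,f(-i\nabla)]\varphi}(p) = (2\pi)^{-d/2}\int_{\R^d} \hat\chi(p-q)\,\bigl(f(q)-f(p)\bigr)\,\hat\varphi(q)\,dq .
\end{align*}
All integrals converge absolutely. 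Indeed, $\hat\chi$ is Schwartz, $\hat\varphi$ is Schwartz, and $|f(q)|\le |f(0)|+L|q|$ grows at most linearly, so Fubini is justified.

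The operator is therefore an integral operator with kernel
\[
K(p,q)=(2\pi)^{-d/2}\hat\chi(p-q)\bigl(f(q)-f(p)\bigr).
\]
The Lipschitz bound gives
\[
|K(p,q)|\le (2\pi)^{-d/2} L\,|p-q|\,|\hat\chi(p-q)| = (2\pi)^{-d/2}L\,g(p-q),
\]
where $g(k):=|k|\,|\hat\chi(k)|$.

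Now apply Young's inequality, or equivalently Schur's test with the kernel $g(p-q)$, on $L^2$. It bounds the operator norm by $(2\pi)^{-d/2}L\,\|g\|_{L^1}$. This step is elementary: by Cauchy–Schwarz,
\[
\Bigl|\int g(p-q)\psi(q)\,dq\Bigr|^2\le \|g\|_1\int g(p-q)|\psi(q)|^2\,dq,
\]
and integrating over $p$ gives $\|g\|_1^2\|\psi\|^2$.

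It remains to identify $\|g\|_1$ with the stated quantity. Since $\widehat{\partial_j\chi}(k)=ik_j\hat\chi(k)$, the vector $\widehat{\nabla\chi}(k)=ik\hat\chi(k)$ has Euclidean norm $|k||\hat\chi(k)|=g(k)$. Hence
\[
\|g\|_1=\int|\widehat{\nabla\chi}(k)|\,dk,
\]
which is finite because $\chi\in C_0^\infty$. By Plancherel the bound transfers back to $x$-space, so the commutator extends from Schwartz space to a bounded operator with the claimed norm estimate.

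I expect the main obstacle to be bookkeeping of the $(2\pi)^{-d/2}$ factors so that the constant comes out exactly as stated. The analytic content is just the Lipschitz bound combined with Young's inequality.
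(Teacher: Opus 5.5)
Your proof is correct and is essentially the paper's argument. The paper writes $\chi=(2\pi)^{-d/2}\int\hat\chi(k)e^{ik\cdot x}\,dk$ and notes that $[e^{ik\cdot x},f(-i\nabla)]$ is a momentum translation times a difference $f(p\pm k)-f(p)$, so its norm is at most $L|k|$; it then bounds the norm by integrating over $k$ with the triangle inequality. After the substitution $k=p-q$ this is exactly your kernel $K(p,q)$, with the $k$-integral triangle inequality playing the role of your Young/Schur step, so the constant $(2\pi)^{-d/2}L\int|k|\,|\hat\chi(k)|\,dk$ comes out identically.
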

\begin{proof}
We have
\begin{align} \label{loc-in-fourier}
\chi(x) = \frac{1}{(2 \pi)^{d/2}} \int_{\R^d} \widehat{\chi}(k) e^{i k \cdot x}dk.
\end{align}
Let $p = -i \nabla$. In Fourier space, the operator $x$ satisfies $(e^{i k \cdot x} \eta)(p) = \eta(p + k)$ so that $[e^{i k \cdot x}, f(p)] = [f(p + k) - f(p)] e^{i k \cdot x}$. Combined with $|f(p + k) - f(p)| \leq L |k|$ and $|\widehat{\chi}(k)| |k| = |\widehat{\nabla \chi}(k)|$ the assertion follows from \eqref{loc-in-fourier}.
\end{proof}

\begin{theorem}
Let $f: \R^d \to \R_+$ be a Lipschitz function. Let $V \in L_\mathrm{loc}^2(\R^d)$ be real-valued and suppose that for each compact $K \subset \R^d$ the operator $V \chi_K$ is $f(-i\nabla)$-bounded
with relative bound smaller than one. If the operator
\begin{align*}
	H = f(-i\nabla) + V \quad \mathrm{in} \: \, L^2(\R^d)
\end{align*}
is bounded below on $C_0^\infty(\R^d)$, then H is essentially self-adjoint on $C_0^\infty(\R^d)$.
 \end{theorem}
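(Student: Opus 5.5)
We apply \Cref{abstract corollary} with $\DD=C_0^\infty(\R^d)$ and the spatial cutoffs $\chi_n,\zeta_n$ from \eqref{spatial-cutoff}. Write $T=f(-i\nabla)$ and $p=-i\nabla$. Without loss of generality we assume $H\ge 1$.

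\emph{Conditions (i) and (ii).} Condition (i) holds by construction. Multiplication by $\chi_n$ and $\zeta_n$ leaves $\DD$ invariant. Since $V$ commutes with these multiplications, the double commutators reduce to $[\chi_n,[\chi_n,T]]$ and $[\zeta_n,[\zeta_n,T]]$.

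\Cref{fourier commutator} gives boundedness of $[\chi_n,T]$. Note that $\zeta_n$ is not compactly supported, but $1-\zeta_n$ is, and $[\zeta_n,T]=-[1-\zeta_n,T]$. The lemma then bounds both first commutators by $L(2\pi)^{-d/2}\|\widehat{\nabla\chi_n}\|_{L^1}$. Scaling gives $\widehat{\nabla\chi_n}(k)=n^{d-1}\widehat{\nabla\chi}(nk)$, so this $L^1$ norm is $n^{-1}\|\widehat{\nabla\chi}\|_{L^1}$. Hence $\|[\chi_n,T]\|=O(1/n)$, and since $\|\chi_n\|\le 1$,
\[
\|[\chi_n,[\chi_n,H]]\|\le 2\|\chi_n\|\,\|[\chi_n,T]\|\to 0 .
\]
The same argument shows that $[\zeta_n,[\zeta_n,H]]$ is bounded.

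\emph{Condition (iii).} Define $H_n:=T+V_+\chi_{2n}-V_-\chi_{2n}=T+V\chi_{2n}$. Here $V\chi_{2n}$ is $T$-bounded with relative bound $<1$, by the hypothesis with $K=\supp\chi_{2n}$ and $0\le\chi_{2n}\le1$.

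The key point is that $T$ is essentially self-adjoint on $C_0^\infty$. It is self-adjoint on $D(T)$, and $C_0^\infty$ is a core. To see this, approximate a Schwartz function $\eta$ by $\chi_m\eta$; then $T\chi_m\eta=\chi_mT\eta+[T,\chi_m]\eta\to T\eta$, because $[T,\chi_m]=O(1/m)$. Also $\mathcal S$ is a core, since its Fourier transform is dense in $D(f)$ in graph norm for continuous $f$.

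Kato--Rellich then shows that $H_n$ is self-adjoint on $D(T)$, essentially self-adjoint on $\DD$, and bounded below. We have $(H-H_n)\chi_n=V(1-\chi_{2n})\chi_n=0$. The commutator $[\zeta_n,[\zeta_n,H_n]]=[\zeta_n,[\zeta_n,T]]$ is bounded, as shown above.

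Finally, \eqref{first-comm-relative-bound} is trivial here: $[\chi_n,H_{2n}]=[\chi_n,T]$ is bounded, so $\|[\chi_n,H_{2n}]\eta\|^2\le d_n\|\eta\|^2\le c_n\sprod{\eta}{H_{2n}\eta}+d_n'\|\eta\|^2$, after shifting $H_{2n}$ by a constant to make it nonnegative.

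\emph{Main obstacle.} The only genuinely delicate step is the core statement for $T$, together with a check that $V\chi_{2n}$ is defined on $\DD$. The latter holds because $V\in L^2_{\rm loc}$. Everything else follows from \Cref{fourier commutator} and the $O(1/n)$ scaling, so the corollary yields essential self-adjointness of $H$.
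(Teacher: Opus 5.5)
Your proposal is correct and follows the paper's proof essentially step by step: you use \Cref{abstract corollary} with the spatial cutoffs, \Cref{fourier commutator} with the $O(1/n)$ scaling (and $1-\zeta_n\in C_0^\infty$ to handle $\zeta_n$), and $H_n=f(-i\nabla)+V\chi_{2n}$ via Kato--Rellich. The only difference is how you show that $C_0^\infty$ is a core for $f(-i\nabla)$: you use a Schwartz core plus the cutoff approximation $\chi_m\eta\to\eta$ controlled by the commutator bound, whereas the paper uses density of $C_0^\infty$ in $H^1$ together with the linear growth $|f(p)|\le |f(0)|+L|p|$; both arguments work.
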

 
\noindent
\emph{Remarks.} 
\begin{enumerate}
\item The result is true in particular for  $f(p) = \sqrt{p^2 + m^2}$ with $m\geq 0$. 
\item If the negative part $V_- = \mathrm{max}(0, -V)$ is \emph{globally} $f(-i \nabla)$-bounded with relative bound less than one, then $H$ is bounded below. There is no restriction on the growth of $V_+$.
\end{enumerate}


\begin{proof}
First, the result holds when $V = 0$. Indeed, $f(-i \nabla)$ is essentially self-adjoint on $C_0^\infty(\R^d)$ because $C_0^\infty(\R^d)$ is dense in $H^1(\R^d) = D(|-i\nabla|)$ and, by the Lipschitz continuity, we have the dense embedding $D(|-i \nabla|) \subset D(f(-i \nabla))$.

In the general case, we apply \Cref{abstract corollary} with cutoff operators $\chi_n$ and $\zeta_n$ defined in \eqref{spatial-cutoff}. By \Cref{fourier commutator}, the commutator $[\chi_n, f(-i \nabla)]$ extends from $C_0^\infty(\R^d)$ to a bounded operator with
\begin{align} \label{first comm to 0}
	\|[\chi_n, f(-i \nabla)] \| &\leq  \frac{L}{n (2 \pi)^{d/2}} \int_{\R^d} |\widehat{\nabla \chi_1}(k)| dk 
\end{align}
and, since $\zeta_n - 1 \in C_0^\infty(\R^d)$, the same holds with $\zeta_n$ in place of $\chi_n$. Hence conditions (i) and (ii) of \Cref{abstract corollary} are satisfied. For (iii) we choose operators
\begin{align*}
	H_n = f(-i \nabla) + V \chi_{2n}.
\end{align*}
By assumption and Kato-Rellich, $H_n$ is self-adjoint on the domain of $f(-i \nabla)$ and essentially self-adjoint on $C_0^\infty(\R^d)$. The assumptions on the commutators in (iii) follow from the boundedness of \eqref{first comm to 0}.
\end{proof}


\section{Essential self-adjointness of Pauli-Fierz Hamiltonians}
\label{PF-ess-sa-section}

In this section we use \Cref{abstract theorem} to prove essential self-adjointness of Pauli-Fierz Hamiltonians with an electrostatic potential $V$ whose positive part $V_{+}$ is an arbitrary, locally square integrable real functions. 
The results on essential self-adjointness from \cite{BFS99,Falconi15,Hiroshima02,HH08} emerge as special cases. 


Let $\hh$ be any separable Hilbert space and let $\FF$ be the symmetric Fock space over $\hh$. 
We consider Hamiltonians in $L^2(\R^d) \otimes \FF$ of the form
 \begin{align} \label{PF hamiltonian}
	H = (-i\nabla + A)^2 + H_f + V,
\end{align} 
which are \emph{generalized} Pauli-Fierz Hamiltonians, in the sense that the one-photon space $\hh$ remains unspecified. This is also reflected in the following fairly weak 
hypotheses: 
\begin{itemize}
	\item[(i)] $V \in L^2_\mathrm{loc}(\R^d)$ is real-valued and $V_- = \max(-V, 0)$ is operator bounded with respect to $-\Delta$ with relative bound smaller than $1$.  
	\item[(ii)] $H_f = d\Gamma(\omega)$ in $\FF$ is the second quantization of a self-adjoint operator $\omega \geq 0$ in $\hh$.
	\item[(iii)] Each component of $A = (A_1, ..., A_d)$ is an operator in $L^2(\R^d) \otimes \FF \stackrel{\sim}{=} L^2(\R^d; \, \FF)$ of the form
	 $$(A_j u)(x) = [a^*(G_j(x)) + a(G_j(x))] u(x)$$
	 with $G_j \in L^\infty(\R^d; \, \hh)$ weakly $\partial_j$-differentiable\footnote{
	 $G_j \in L^\infty(\R^d; \, \hh)$ is weakly $\partial_j$-differentiable if there exists $F \in L^\infty(\R^d; \, \hh)$ such that for all $\eta \in \hh$ and $f \in C_0^\infty(\R^d)$ we have $ \int \sprod{\eta}{F(x)}_\hh f(x) dx = - \int  \sprod{\eta}{G_j(x)}_\hh  \partial_j f(x) dx $.} and $\partial_j G_j \in L^\infty(\R^d; \, \hh)$.
\end{itemize}
Let $\hh_0$ be a core for $\omega$ and 
\begin{align} \label{def-field-core}
	\DD_f = \big\{\psi \in \FF_\mathrm{fin} | \, \psi^{(n)} \in \otimes_\mathrm{sym}^n \hh_0 \big\} 
\end{align}
with $\FF_\mathrm{fin}$ the finite particle subspace of $\FF$. Then $\DD_f$ is a core for $H_f$. 

\begin{theorem} \label{PF main thm}
Assume $\mathrm{(i) - (iii)}$. Then $H$ is essentially self-adjoint on $\DD = C_0^\infty(\R^d) \otimes \DD_f$. 
\end{theorem}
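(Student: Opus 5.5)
We apply \Cref{abstract theorem} with $\chi_n$ a \emph{number cutoff}, as announced in the introduction. Fix a smooth $\chi:\R\to[0,1]$ with $\chi(r)=1$ for $r\le1$ and $\chi(r)=0$ for $r\ge2$, and set $\chi_n=1\otimes\chi(N/n)$, where $N=d\Gamma(1)$ is the number operator on $\FF$. Then (a) holds: $\chi_{2k}\chi_n=\chi_n$ for $k\ge n$, and $\chi_n\to1$ strongly. Moreover $\chi_n\DD\subset\DD$ because $\chi_n$ acts diagonally on $n$-particle sectors. For the comparison operator we take
\[
H_0=-\Delta+H_f+N+V_++1 ,
\]
or, if convenient, the self-adjoint operator associated with the closed form of $(-i\nabla+A)^2+H_f+V_++cN+C$. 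What we need is that $\DD$ is a form core for $H_0$ and that $H_0$ commutes with $\chi_n$. With the first choice, $-\Delta+H_f+N+1$ is a sum of commuting nonnegative operators, so $\DD$ is a form core for it. Adding the form of $V_+\in L^2_{\rm loc}$ causes no problem, since $C_0^\infty$ is a form core for $-\Delta+V_+$. Because $H_0$ commutes with $N$, (b) holds trivially with $c_n=1$.

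For (c), the main tool is the standard bound $\|a^\sharp(G)\psi\|\le\|G\|\,\|(N+1)^{1/2}\psi\|$ together with $\|G_j\|_\infty<\infty$. On the range of $\chi_n$ the particle number is at most $2n$, so $A$ is bounded there with norm $O(\sqrt n)$. Since $\chi_n A\chi_n$ only involves sectors with at most $2n+1$ particles, this gives
\[
\chi_n(-i\nabla+A)^2\chi_n\le 2\chi_n(-\Delta)\chi_n+C_n .
\]
Together with $\chi_n V\chi_n\le\chi_nV_+\chi_n$ and $H_f\ge0$, this yields $\chi_nH\chi_n\le c_nH_0$. For \eqref{double comm bound}, $\chi_n$ commutes with $-\Delta$, $H_f$ and $V$, so only $A$ contributes. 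Using the pull-through relation $a(G)f(N)=f(N+1)a(G)$, one gets
\[
[\chi_n,A_j]=a^*(G_j)\bigl(\chi(N/n)-\chi((N+1)/n)\bigr)+\text{h.c.},
\]
which has norm $O(n^{-1/2})$ because $\chi$ is Lipschitz and the cutoff lives on particle numbers of size about $n$. Likewise $[\chi_n,[\chi_n,A_j]]$ is $O(n^{-3/2})$. Expanding $[\chi_n,[\chi_n,(p+A)^2]]$ gives terms of the form
\[
2[\chi_n,A]\cdot[\chi_n,A]+\{p+A,[\chi_n,[\chi_n,A]]\},
\]
so the double commutator is bounded by $c_n(\|p\,\chi_{2n}\eta\|+\|\eta\|)$. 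Here I use that these commutators are localized to $N\le 2n+2$; this may force replacing $\chi_{2n}$ by $\chi_{4n}$ or re-indexing the cutoffs slightly. The $\partial_jG_j$ terms are harmless.

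For (d), we need $\chi_n\ker H^*\subset Q(H_0)$. We invoke \Cref{abstract lemma} with $\chi=\chi_n$ and an essentially self-adjoint comparison operator
\[
\hat H=(-i\nabla+A\,\tilde\chi)^2+H_f+N+V ,
\]
where $\tilde\chi$ is a number cutoff at level $4n$, or more simply $\hat H=-\Delta+H_f+N+V$. The latter is essentially self-adjoint on $\DD$ by tensor-product arguments combined with \Cref{LS-generalized}, or Kato--Rellich for the $V_-$ part. Then
\[
(H-\hat H)\chi_n=\bigl(2A\cdot p+A^2-i\,\mathrm{div}A-N\bigr)\chi_n ,
\]
which is bounded relative to $\hat H^{1/2}$ since $A$ is bounded on the cutoff range. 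The commutator $[\chi_n,\hat H]=0$. This verifies \eqref{relative bound 1} and \eqref{relative bound 2}, so $\chi_n\ker H^*\subset Q(\hat H)=Q(H_0)$.

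Finally, for \eqref{double comm vanishes}: the double commutator has the $\|[\chi_n,A]\|^2=O(1/n)$ term, which is fine. The delicate term is $\langle u,\{p+A,[\chi_n,[\chi_n,A]]\}u\rangle$, which requires controlling $\|p\,\chi_{2n}u\|$ against the decay $n^{-3/2}$. This is the main obstacle. We need an a priori bound such as $\|p\chi_{2n}u\|^2\lesssim n\|u\|^2$, obtained by applying \Cref{main lemma} with $H^*u=0$. That gives $\langle\chi_nu,H\chi_nu\rangle=-\tfrac12\langle u,\overline{[\cdots]}u\rangle$, whose right side in turn involves $\|p\chi_{2n}u\|$. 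This sets up a bootstrap across scales; alternatively one can bound $A$ on the range of $\chi_n$ by $O(\sqrt n)$ to control $\|p\chi_nu\|$ via the form of $H$ and $V_-$ relative boundedness. If the powers of $n$ do not close, the first fallback is to use a slower cutoff $\chi(N/n^2)$ to gain extra decay.
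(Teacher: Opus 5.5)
Your overall architecture is sound: \Cref{abstract theorem} with a number cutoff, an $H_0$ that commutes with $N$, and \Cref{abstract lemma} for $\chi_n\ker H^*\subset Q(H_0)$. Your checks of (a)--(c) go through; in fact \eqref{double comm bound} holds with $\chi_{2n}$ as stated, since $[\chi_n,[\chi_n,A]]$ only acts on $N\le 2n$.

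\emph{Main gap.} The proof is incomplete at the decisive point, \eqref{double comm vanishes}, which you leave open, and the bootstrap you suggest cannot close it. The IMS identity behind \Cref{main lemma} bounds $\|p\chi_n u\|^2$ by $O(n)\|u\|^2+O(n^{-3/2})\|u\|\,\|p\chi_{2n}u\|$. This only links scale $n$ to scale $2n$, and it is compatible with $\|p\chi_n u\|$ growing exponentially in $n$. Slowing the cutoff to $\chi(N/n^2)$ merely reparametrizes the scales and does not help.

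The missing idea is a single-scale a priori estimate, obtained by testing $H^*u=0$ against vectors in one particle sector (Step 2 of the paper's proof). Let $\chi^n$ be the projection onto $N=n$. Then $\chi^n u\in Q(H_0)$ by \Cref{abstract lemma}, and since $H_0$ commutes with $N$,
\[
\|H_0^{1/2}\chi^n u\|^2=\langle u,(H_0-H)\chi^n u\rangle .
\]
Here $H-H_0$ is first order in $p$, with coefficients of size $O(\sqrt n)$ on sectors near $n$. Crucially, the derivative falls on $\chi^n u$ itself, so it can be absorbed into the left side, giving $\|H_0^{1/2}\chi^n u\|^2\le Cn\sum_{|\ell|\le 2}\|\chi^{n+\ell}u\|^2$. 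Summing yields $\sum_n n^{-1}\|H_0^{1/2}\chi^n u\|^2<\infty$. In particular $\|p P_M u\|^2=O(M)\|u\|^2$, where $P_M$ is the spectral projection of $N$ onto $\{0,\dots,M\}$; this is exactly the bound you said you needed.

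With this estimate, your own scaling ($O(n^{-1/2})$ for $[\chi_n,A]$ and $O(n^{-3/2})$ for $[\chi_n,[\chi_n,A]]$) gives $\langle u,\overline{[\chi_n,[\chi_n,H]]}u\rangle=O(1/n)$. So your smooth cutoff would even yield a genuine limit, not just a $\liminf$. The paper instead uses the sharp cutoff $\chi(N\le n)$, whose double commutator does not decay. It obtains the $\liminf$ from $\sum_n n^{-1}|\langle u,\overline{[\chi_n,[\chi_n,H]]}u\rangle|<\infty$ together with $\sum_n n^{-1}=\infty$.

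\emph{Second omission.} \Cref{abstract theorem} requires $H$ to be bounded below on $\DD$, which you never establish and which is not automatic here. Hypothesis (iii) contains no infrared condition, so $A^2$ is controlled by $N+1$ but not by $H_f$. Moreover, $V_-$ is controlled only by $-\Delta$, not by $(-i\nabla+A)^2$. The paper obtains the lower bound from the diamagnetic inequality for Pauli--Fierz operators (see \Cref{basic properties}).
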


\noindent
\emph{Remarks.} 
\begin{enumerate}
\item Notice the absence of any restrictions on $V_{+}$ other than  $V_{+} \in L^2_\mathrm{loc}(\R^d)$. A comparable result with $V_{-}=0$ is known from Section 4.3 of \cite{Falconi15}. 

\item If $V$ is infinitesimally $\Delta$-bounded and if $H$ denotes the usual Pauli-Fierz Hamiltonian, which is a special case of \eqref{PF hamiltonian}, then $H$ is known to be self-adjoint on the domain of $-\Delta + H_f$ and hence it is essentially self-adjoint on any core of $-\Delta + H_f$ \cite{BFS99,HH08,Hiroshima02,Kussmaul}.  
\Cref{PF main thm}, like the similar result in \cite{Falconi15}, shows that essential self-adjointness holds without restrictions on the infrared part of the electron-photon interaction. The same is true with the coupling of spin and radiation included, which we neglect for notational simplicity.

\item Our proof of \Cref{PF main thm} is based on \Cref{abstract theorem} and on the results on essential self-adjointness of $-\Delta + V$ in \cite{Kato72,RS75}. 
Essential self-adjointness of $H_0\coloneqq H_{A = 0}$ carries over to $H$ in our proof. We use a number cutoff in Fock space for the localization operators $\chi_n$, see \eqref{Fock cutoff}, and ideas from \cite{Falconi15} to verify condition \eqref{double comm vanishes}.  

\item The operator $(-i\nabla + A)^2$ in \eqref{PF hamiltonian} is defined by
$$(-i\nabla + A)^2 = -\Delta -i \nabla \cdot A -i A \cdot \nabla + A^2,$$ which is well-defined on $C_0^\infty(\R^d) \otimes \FF_\mathrm{fin}$. 
By Lemma 13 in \cite{HH08}, $\nabla \cdot A = a^*(\mathrm{div} \, G) + a(\mathrm{div}  \, G) + A \cdot \nabla$.
\end{enumerate}

\noindent
In the following $\chi_n$ denotes the number cutoff
\begin{align} \label{Fock cutoff}
	\chi_n = \chi(N \leq n) \quad (n \in \N) 
\end{align}
with $N = d\Gamma(1)$ the number operator in $\FF$, and 
\begin{align} \label{aux hamiltonian}
	H_0 = -\Delta + V + H_f \quad \mathrm{in} \: \,  L^2(\R^d) \otimes \FF.
\end{align}
As a preparation for the proof of \Cref{PF main thm} we first establish the following two lemmas.

\begin{lemma} \label{basic properties}
Assume (i) - (iii). On $\DD$, $H$ and $H_0$ are bounded below. The operators $-\Delta$, $\chi_n H_0 \chi_n$ and $\chi_n H \chi_n$ are form bounded with respect to $H_0$.
\end{lemma}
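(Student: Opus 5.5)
The plan is to derive everything from two standard ingredients: the relative bound on $V_-$ from (i), and the usual $N^{1/2}$- and $H_f^{1/2}$-bounds for creation and annihilation operators. Throughout we work on $\DD = C_0^\infty(\R^d)\otimes\DD_f$.

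\emph{Lower bound for $H_0$.} By (i), $V_-$ is $-\Delta$-bounded with relative bound $a<1$. Hence it is also $-\Delta$-form bounded with form bound $a<1$, so
\begin{equation*}
\sprod{\eta}{V_-\eta}\le a\|\nabla\eta\|^2 + b\|\eta\|^2 .
\end{equation*}
Using $V_+\ge 0$ and $H_f\ge0$ we get
\begin{equation*}
\sprod{\eta}{H_0\eta} \ge (1-a)\|\nabla\eta\|^2 + \sprod{\eta}{(V_++H_f)\eta} - b\|\eta\|^2 .
\end{equation*}
This shows at once that $H_0$ is bounded below. It also shows that $-\Delta$, $V_+$ and $H_f$ are each form bounded by $H_0+b+1$, which settles the claim for $-\Delta$.

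\emph{Lower bound for $H$.} The cleanest route is the diamagnetic inequality in the Fock-space-valued setting. Since each $A_j(x)$ is self-adjoint, one has
\begin{equation*}
\|(-i\nabla+A)\eta\|^2\ge \|\nabla\|\eta(\cdot)\|_\FF\|^2 ,
\end{equation*}
which can be proved via Kato's inequality or the Feynman--Kac--It\^o formula. Combined with the $-\Delta$-form bound of $V_-$ applied to the scalar function $\|\eta(\cdot)\|_\FF$, this gives $H\ge -b$. If the diamagnetic route seems too delicate, I would instead try an estimate restricted to the range of $\chi_n$ only, which is all the later application needs. However, the lemma claims $H$ is bounded below outright, so the diamagnetic inequality is the intended tool.

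\emph{Form bounds for $\chi_n H_0\chi_n$ and $\chi_n H\chi_n$.} First, $\chi_n$ commutes with $-\Delta$, $V$ and $H_f$, and $0\le\chi_n\le1$. Since $\chi_n$ is a spectral projection commuting with $H_0$, we get $\chi_nH_0\chi_n\le H_0$ up to the constant coming from the lower bound.

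For $H$, write
\begin{equation*}
\chi_nH\chi_n=\chi_nH_0\chi_n+\chi_n(-i\nabla\cdot A - iA\cdot\nabla + A^2)\chi_n .
\end{equation*}
On the range of $\chi_n$ the number operator is bounded by $n$, while $a(G)$ and $a^*(G)$ map particle number $\le n$ into $\le n+1$. Hence
\begin{equation*}
\|A_j\chi_n\eta\|\le C\|G_j\|_\infty (n+2)^{1/2}\|\chi_n\eta\| ,
\end{equation*}
and $A^2\chi_n$ is bounded in the same way. For the cross terms, Lemma 13 of \cite{HH08} gives
\begin{equation*}
\nabla\cdot A = a^*(\mathrm{div}\, G)+a(\mathrm{div}\, G)+A\cdot\nabla .
\end{equation*}
Thus $|\sprod{\chi_n\eta}{A\cdot\nabla\chi_n\eta}|\le c_n\|\chi_n\eta\|\,\|\nabla\chi_n\eta\|$. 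Cauchy--Schwarz then bounds this by $\|\nabla\eta\|^2 + c_n^2\|\eta\|^2$, using that $\chi_n$ commutes with $\nabla$. The first term is $H_0$-form bounded by the first step.

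\emph{Main obstacle.} The genuinely nontrivial point is the lower bound for $H$ itself, that is, justifying the operator-valued diamagnetic inequality under the weak hypotheses (iii). The form-bound statements are routine $N$-cutoff estimates.
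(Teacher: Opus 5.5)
Your proof is correct and follows the paper's route. The relative bound on $V_-$ gives the lower bound for $H_0$ and the form bound on $-\Delta$, the diamagnetic inequality gives the lower bound for $H$, and boundedness of $A\chi_n$ on the sector $N\le n$ gives the form bound on $\chi_nH\chi_n$. The paper differs only in small ways: it estimates $\|(-i\nabla+A)\chi_n u\|^2\le 2\|\nabla u\|^2+2\|A\chi_n\|^2\|u\|^2$ directly instead of expanding the cross terms, and it obtains the bound for $\chi_nH_0\chi_n$ as the case $A=0$.

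Two small remarks. On $\DD$ the diamagnetic inequality is actually elementary: $\langle\eta(x),A_j(x)\eta(x)\rangle_\FF$ is real, so $|\partial_j\|\eta(x)\|_\FF|\le\|(\partial_j+iA_j(x))\eta(x)\|_\FF$ holds pointwise. Also, your fallback aside is mistaken: Theorem~\ref{abstract theorem} needs $H$ bounded below globally, not just on $\operatorname{Ran}\chi_n$, to reduce essential self-adjointness to $\ker H^*=\{0\}$.
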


\begin{proof}
By assumption (i), $V_-$ is form bounded w.r.t.$-\Delta + V_+$ with relative bound smaller than $1$, so $-\Delta + V$ and $-\Delta + V_+$ are relatively form bounded to each other. From $H_f \geq 0$ it follows that $H_0$ is bounded below and that $-\Delta$ is form bounded w.r.t. $H_0$. By the diamagnetic inequality in QED \cite{GK25}, any lower bound for $-\Delta + V$ is a lower bound for $H$, so $H$ is bounded below.
To bound the form of  $\chi_n H \chi_n$ from above, we drop $V_{-}$ and use that $ A \chi_n$ is a bounded operator. We find
\begin{align*}
\sprod{u}{\chi_n H \chi_n u} &= \| (-i\nabla + A) \chi_n u \|^2 + \sprod{u}{\chi_n V \chi_n u} + \sprod{u}{\chi_n H_f \chi_n u} \\ 
&\leq 2 \| \nabla u \|^2 + 2 \| A \chi_n \|^2 \|u\|^2 + \| V_+^{1/2}u \|^2 +  \| H_f^{1/2}u \|^2 \\ 
&\leq c_n \sprod{u}{(-\Delta + V_+ + H_f + 1) u} \quad (u \in \DD)
\end{align*}
Since $-\Delta + V_{+} + H_f$ is form-bounded w.r.t. $H_0$, the desired bound for $\chi_n H \chi_n$ follows. The bound for $\chi_n H_0 \chi_n$ holds as well 
because $H_0=H$ for $A=0$.
\end{proof}

\begin{lemma} \label{double comm}
	Assume (i) - (iii). Let $\chi_n$ be defined by \eqref{Fock cutoff} and let $\chi^n = \chi(N = n)$. Then, on $\DD$, 
	\begin{align}
		i[\chi_n, H] =  (-i\nabla + A) \,  \cdot  &\left( -i \chi^{n + 1}a^*(G) \chi^{n} + i \chi^n a(G) \chi^{n + 1} \right) \nonumber \\ 
		+ &\left(i \chi^n a(G)\chi^{n + 1} - i \chi^{n + 1} a^*(G) \chi^n \right) \cdot (-i\nabla + A).
		\label{first comm}
	\end{align}
\end{lemma}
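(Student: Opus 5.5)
The identity is purely algebraic: $\chi_n$ commutes with everything in $H$ except the field operators inside $A$. I will verify it on vectors $u=f\otimes\psi\in\DD$, with $f\in C_0^\infty(\R^d)$ and $\psi\in\DD_f$. Such vectors lie in $C_0^\infty(\R^d)\otimes\FF_{\rm fin}$, so every expression is well-defined, and $\chi_n$ leaves $\DD$ invariant.

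\emph{Step 1: discard the commuting terms.} The operator $\chi_n=\chi(N\le n)$ acts only on $\FF$, so it commutes with $-\Delta$, $-i\nabla$ and $V$. Since $H_f=d\Gamma(\omega)$ preserves particle number, $[\chi_n,H_f]=0$ on $\DD$. Writing $(-i\nabla+A)^2=\sum_j(-i\partial_j+A_j)(-i\partial_j+A_j)$ on $\DD$ and using Leibniz's rule for commutators,
\begin{align*}
[\chi_n,H]=\sum_j\big([\chi_n,A_j](-i\partial_j+A_j)+(-i\partial_j+A_j)[\chi_n,A_j]\big).
\end{align*}
This step needs the factorized form of $(-i\nabla+A)^2$ to agree on $\DD$ with the definition in Remark 4. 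That holds because $A_j$ maps $C_0^\infty\otimes\FF_{\rm fin}$ into vectors that are weakly differentiable in $x$, by (iii) and Lemma 13 of \cite{HH08}. I expect this domain bookkeeping to be the only delicate point.

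\emph{Step 2: compute $[\chi_n,A_j]$.} Since $a^*(g)$ raises particle number by exactly one, we have $\chi^m a^*(g)=a^*(g)\chi^{m-1}$. Using $\chi_n=\sum_{m\le n}\chi^m$, this gives
\begin{align*}
\chi_n a^*(g)-a^*(g)\chi_n=\sum_{m\le n}a^*(g)\chi^{m-1}-\sum_{m\le n}a^*(g)\chi^m=-a^*(g)\chi^n=-\chi^{n+1}a^*(g)\chi^n.
\end{align*}
The boundary term at $m=0$ vanishes because $\chi^{-1}=0$. Taking adjoints on $\FF_{\rm fin}$ gives $[\chi_n,a(g)]=\chi^n a(g)\chi^{n+1}$. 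Both identities hold pointwise in $x$ with $g=G_j(x)$. Hence
\begin{align*}
[\chi_n,A_j]=\chi^n a(G_j)\chi^{n+1}-\chi^{n+1}a^*(G_j)\chi^n.
\end{align*}

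\emph{Step 3: assemble.} Multiply the expression from Step 1 by $i$ and insert the result of Step 2. The term $(-i\nabla+A)\cdot i[\chi_n,A]$ becomes
\begin{align*}
(-i\nabla+A)\cdot\big(-i\chi^{n+1}a^*(G)\chi^n+i\chi^n a(G)\chi^{n+1}\big),
\end{align*}
and the term $i[\chi_n,A]\cdot(-i\nabla+A)$ becomes the second line of \eqref{first comm}. Together these are exactly \eqref{first comm}.
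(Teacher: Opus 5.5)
Your proof is correct and follows the paper's argument. The paper likewise reduces the identity to $A=a(G)+a^*(G)$, the commutation relations $[\chi_n,a(G)]=\chi^n a(G)\chi^{n+1}$ and $[\chi_n,a^*(G)]=-\chi^{n+1}a^*(G)\chi^n$, and the fact that $\chi_n$ commutes with $\nabla$, $V$ and $H_f$; you additionally supply the details the paper leaves implicit, namely the telescoping derivation of these commutation relations and the Leibniz expansion of $(-i\nabla+A)^2$.
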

\begin{proof}
	This follows from $A = a(G) + a^*(G)$, from
	\begin{align} \label{chi a comm}
		[\chi_n, a(G)] &= \, \chi^n a(G) \chi^{n + 1} \\ 
		[\chi_n, a^*(G)] &= - \chi^{n + 1} a^*(G) \chi^{n}, \label{chi a* comm}
	\end{align}
	and from the fact that $\chi_n$ commutes with $\nabla, V$ and $H_f$. 
\end{proof}

\begin{proof}[Proof of \Cref{PF main thm}]
We apply \Cref{abstract theorem} with cutoff operators $(\chi_n)$ defined by \eqref{Fock cutoff} and $H_0$ defined by \eqref{aux hamiltonian}. By Theorem X.29 in \cite{RS75}, $-\Delta + V$ is essentially self-adjoint on $C_0^\infty(\R^d)$, so $H_0$ is essentially self-adjoint on $\DD = C_0^\infty(\R^d) \otimes \DD_f$.
By \Cref{basic properties}, both $H_0$ and $H$ are bounded below on $\DD$ and hence, upon adding a constant, we may assume $H_0, H \geq 1$. Again by \Cref{basic properties}, conditions \eqref{H_0 H_0 form bound} and \eqref{H H_0 form bound} of \Cref{abstract theorem} are satisfied. It remains to verify \eqref{double comm bound}, \eqref{local regularity} and \eqref{double comm vanishes}.

\medskip\noindent
\underline{\emph{Step 1}}. For each $n \in \N$ there exists $c_n > 0$ such that 
\begin{align} \label{PF first comm bound}
	\|[\chi_n, H] \eta \| \leq c_n ( \| H_0^{1/2} \chi_{2n} \eta\| + \| \eta \|) \quad (\eta \in \DD).
\end{align}

We use expression \eqref{first comm} for the commutator $[\chi_n, H]$ and the product rule 
\begin{align} \label{a prod rule}
	\nabla \cdot  a^{\#}(G) = a^{\#}(\mathrm{div} \, G) + a^{\#}(G) \cdot \nabla, 
\end{align}
where $a^{\#}(G)$ denotes either $a^*(G)$ or $a(G)$. In view of the standard bounds \eqref{relative sqrt N bound}, we arrive at
\begin{align*}
	\|[\chi_n, H] \eta \| \leq c_n (\| \nabla \chi_{2 n} \eta \| + \| \eta \|) \quad (\eta \in \DD).
\end{align*}
Since, by \Cref{basic properties}, the operator $\nabla$ is $H_0^{1/2}$-bounded, \eqref{PF first comm bound} follows.

\medskip\noindent
\underline{\emph{Step 2}}. If $u \in \ker H^*$ then $\chi^n u \in Q(H_0)$ and 
\begin{align} \label{p summable}
\sum_{n = 1}^\infty \frac{1}{n} \|H_0^{1/2} \chi^n u \|^2 < \infty \quad (\chi^n = \chi(N = n)).
\end{align}

The proof of this step is similar to Falconi's arguments in Theorem 3.1 of \cite{Falconi15}. From
\begin{align}
	H - H_0 &=  -i a^*(\mathrm{div} \, G) -i a(\mathrm{div}  \, G) -2 i A \cdot \nabla + A^2 \label{PF interaction}
\end{align}
we see that $(H - H_0) \chi_n$ is $H_0^{1/2}$-bounded. Since $[\chi_n, H_0] = 0$, we conclude from \Cref{abstract lemma} with the choices $\chi = \chi_n$ and $\hat{H} = H_0$ that 
 $\chi_n D(H^*) \subset Q(H_0)$. Hence $\chi_n u\in Q(H_0)$.
It remains to verify \eqref{p summable}. If $u \in \ker H^*$, then for all $\eta \in \DD$,
\begin{align*}
	\sprod{H_0^{1/2}\chi^n u}{H_0^{1/2}  \eta} = \sprod{u}{H_0 \chi^n \eta} = \sprod{u}{(H_0 - H) \chi^n \eta}.
\end{align*}
This extends to all $\eta\in Q(H_0)$ because $\DD$ is a core of $H_0^{1/2}$. Choosing $\eta = \chi^n u$ we get
\begin{align*}
	\| H_0^{1/2} \chi^n u \|^2 &= \sprod{u}{(H_0 - H) \chi^n u} \\ 
	&= O(\sqrt{n}) \sum_{\ell = -2}^2 \| \chi^{n + \ell}u \|  \big( O(\sqrt{n}) \| \chi^n u \| + \| \nabla \chi^{n}u \| \big), 
\end{align*}
where in the last line we estimated the expression $\eqref{PF interaction}$ using $\| a^{\#}(G) \chi^n u \| = O(\sqrt{n}) \| \chi^n u \|$ and the same bound for $\mathrm{div} \, G$. 
Next, we use
\begin{align*}
\| \chi^{n + \ell}u \| \cdot \| \chi^n u \| &\leq \| \chi^{n + \ell}u \|^2 + \| \chi^n u \|^2, \\ 
\| \chi^{n + \ell}u \| \cdot \| \nabla \chi^{n}u \| &\leq \| \chi^{n + \ell}u \|^2/\eps +  \eps \| \nabla \chi^{n}u \|^2,
\end{align*}
combined with $\|\nabla  \chi^n u \|^2 \leq C \| H_0^{1/2} \chi^n u \|^2$, to obtain a bound of the form 
\begin{align*}
	\| H_0^{1/2} \chi^n u \|^2 =  O(n)  \sum_{\ell = -2}^2 \| \chi^{n + \ell}u \|^2.
\end{align*}
This implies \eqref{p summable}.

\medskip\noindent
\underline{\emph{Step 3}}. If $u \in \ker H^*$ then\footnote{In fact, the first commutator $[\chi_n, H]$ alone yields the desired decay. The double commutator $[\chi_n, [\chi_n, H]]$ has no improved behavior.}
 $$\liminf_{n \to \infty} |\sprod{u}{\overline{[\chi_n, [\chi_n, H]]} u}| = 0.$$

Since $\sum_{n} n^{-1} = \infty$ it suffices to show that
\begin{align} \label{liminf sufficient}
	\sum_{n = 1}^\infty \frac{1}{n} |\sprod{u}{\overline{[\chi_n, [\chi_n, H]]}u}| < \infty.
\end{align}
By the expression \eqref{first comm} for $[\chi_n, H]$ and by $\| a^{\#}(G) \chi^n u \| = O(\sqrt{n}) \| \chi^n u \|$,
\begin{align}
	\lefteqn{|\sprod{u}{\overline{[\chi_n, [\chi_n, H]]} u}|}\notag \\ 
	& \quad = 2 |\mathrm{Re}\sprod{\chi_n u}{\overline{[\chi_n, H]} u}| \notag\\ 
	& \quad  = \bigg(O(\sqrt{n}) \sum_{\ell = -2}^{2} \|\chi^{n + \ell} u \| \bigg) \bigg(\| \nabla \chi^n u \| + \|\nabla \chi^{n + 1} u \| +  O(\sqrt{n}) \sum_{\ell = -2}^2 \|\chi^{n + \ell}u\| \bigg)\notag \\ 
	& \quad  \leq \| \nabla \chi^n u \|^2 + \|\nabla \chi^{n + 1} u \|^2 + O(n) \sum_{\ell = -2}^2 \|\chi^{n + \ell}u\|^2.\label{last}
\end{align}
Since $\nabla$ is $H_0^{1/2}$-bounded, \eqref{last} combined with \eqref{p summable} implies \eqref{liminf sufficient}.

With the Conditions \eqref{double comm bound},  \eqref{local regularity}, and \eqref{double comm vanishes} now being established in the Steps 1-3 above, all hypotheses of 
\Cref{abstract theorem} are satisfied and essential self-adjointness of $H$ on $\DD$ follows from \Cref{abstract theorem}.
\end{proof}

\appendix 
\section{Appendix}


\begin{lemma}\label{Sobolev}
Let $d \leq 3$ and $N \geq 1$. Let $v: \R^d \to \R$. Let $V : \R^{dN} \to \R$ be either defined by $V(x) = v(x_j)$ or by $V(x) = v(x_j - x_k)$ for $j \neq k$ where $x = (x_1, \dots x_N) \in \R^{d N}$. 
\begin{enumerate}
\item[(i)] If $v \in L^p(\R^d) + L^\infty(\R^d)$ with $p \geq 2$, then $V$ is infinitesimally $\Delta$-bounded. 
\item[(ii)] If $v \in L^p(\R^d) + L^\infty(\R^d)$ with $p \geq 3$, then $V$ is infinitesimally $\nabla$-bounded.
\end{enumerate}
\end{lemma}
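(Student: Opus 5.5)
The plan is to reduce both statements to one-body estimates in $L^2(\R^d)$ and then lift them to $\R^{dN}$ by a change of variables and Fourier analysis in the remaining coordinates.

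\emph{Reduction to $N=1$.} For $V(x)=v(x_j-x_k)$ we use the linear change of variables $y_1=x_j-x_k$ with the other coordinates completed to an invertible linear map. Since the Laplacian is invariant under unitary changes of variables, we instead choose new orthonormal coordinates $y_1=(x_j-x_k)/\sqrt2$ (a $d$-dimensional block), $y'$ the rest. Then $V(x)=v(\sqrt2\,y_1)$ and $-\Delta_x=-\Delta_{y_1}-\Delta_{y'}$. A one-body bound
\[
\|W\varphi\|_{L^2(\R^d)}\le a\|\Delta\varphi\|+b_a\|\varphi\|,
\]
with $W=v(\sqrt2\,\cdot)$ in the same $L^p+L^\infty$ class, applied for a.e. fixed $y'$ and integrated, gives $\|V\psi\|\le a\|\Delta_{y_1}\psi\|+b_a\|\psi\|$. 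Finally we use $\|\Delta_{y_1}\psi\|\le\|\Delta\psi\|$, which holds because in Fourier space $|k_1|^2\le |k|^2$. The same argument with $\nabla_{y_1}$ and $|k_1|\le|k|$ handles (ii), and the case $V(x)=v(x_j)$ is identical without any rotation.

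\emph{One-body estimates.} We write $v=v_1+v_2$ with $v_1\in L^p$ and $v_2\in L^\infty$. Given $\varepsilon>0$ we may further split so that $\|v_1\|_p<\varepsilon$, by truncating where $|v_1|>M$ and moving the bounded part into $L^\infty$. Since $d\le3$ we can also reduce $p$: on $\{|v_1|\le1\}$ the function lies in $L^\infty$, and on $\{|v_1|>1\}$ it lies in $L^q$ for all $q\le p$. So for (i) we take $p=2$, and for (ii) we take $p=3$ when $d=3$, while for $d\le2$ any $p\ge3$ also works.

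For (i) we use $\|v_1\varphi\|_2\le\|v_1\|_2\|\varphi\|_\infty$ together with the Sobolev/Fourier bound $\|\varphi\|_\infty\le (2\pi)^{-d/2}\|\hat\varphi\|_1\le C(\|\Delta\varphi\|+\|\varphi\|)$, valid for $d\le3$ because $(1+|k|^2)^{-1}\in L^2(\R^d)$. Together with the smallness of $\|v_1\|_2$ this gives infinitesimal boundedness. A scaling argument gives $\|\hat\varphi\|_1\le a\|\Delta\varphi\|+C_a\|\varphi\|$ directly, so the smallness of $v_1$ is not even needed.

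For (ii) we apply Hölder: $\|v_1\varphi\|_2\le\|v_1\|_p\|\varphi\|_{r}$ with $1/r=1/2-1/p$, so $r\le6$ when $p\ge3$. Sobolev's embedding $H^1(\R^d)\subset L^r$ holds for $2\le r\le6$ if $d=3$ and for all $r\in[2,\infty)$ if $d\le2$. It gives $\|\varphi\|_r\le C(\|\nabla\varphi\|+\|\varphi\|)$, and the smallness of $\|v_1\|_p$ yields infinitesimal $\nabla$-boundedness.

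\emph{Main obstacle.} The delicate point is the endpoint $p=3$, $d=3$ in (ii). There the scaling argument fails and infinitesimality must come from making $\|v_1\|_3$ small via the truncation, which is legitimate because $v_1\in L^3$. The reduction step needs only care with measurability and Fubini, and I expect it to be routine.
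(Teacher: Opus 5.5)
Your proof is correct and follows the paper's argument. You reduce by a change of coordinates to a one-body potential, split $v$ into a part with small $L^p$ norm plus an $L^\infty$ part, and combine H\"older's inequality with the Sobolev embeddings $H^2(\R^d)\subset L^q$ ($2\le q\le\infty$) and $H^1(\R^d)\subset L^q$ ($2\le q\le 6$) for $d\le 3$. You also spell out the coordinate reduction (orthogonal change of variables, Fubini in $y'$, and $|k_1|\le|k|$ in Fourier space) that the paper leaves implicit, and in (i) your route of lowering $p$ to $2$ and bounding $\|\varphi\|_\infty$ via $\|\hat\varphi\|_1$ is just the $q=\infty$ case of the paper's H\"older--Sobolev step (note that lowering $p$ does not actually need $d\le 3$).
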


\begin{proof}
By a change of coordinates it suffices to prove the lemma for $V = v: \R^d \to \R$. If $V \in L^p(\R^d) + L^\infty(\R^d)$, then for each $\eps > 0$ there exists a decomposition $V = V_\eps + V^\eps$ with $\|V_\eps\|_p < \eps$ and $\|V^\eps\|_\infty < \infty$. The claims (i) and (ii) now follow H\"older's inequality $\|V_\eps\psi\|_2\le \|V_\eps\|_p \|\psi\|_q$ with $p^{-1}+q^{-1} = 1/2$, from the Sobolev embeddings in dimensions $d \leq 3$
\begin{enumerate}
\item[(i')] $H^2(\R^d) \to L^q(\R^d) \quad (2 \leq q \leq \infty)$,
\item[(ii')] $H^1(\R^d) \to L^q(\R^d) \quad (2 \leq q \leq 6)$,
\end{enumerate}
and from the fact that the graph norms of $\Delta$ and $\nabla$ are equivalent to the $H^2$ and $H^1$ norms, respectively.
\end{proof}

\begin{lemma} \label{standard bounds}
If $m \in \N$ and $h_1, \dots, h_m \in L^\infty(\R^d; \, \hh)$ then 
\begin{align} 
\| a^{\#}(h_1) \dots a^{\#}(h_m) (N + 1)^{-m/2} \| &\leq C_m \, \| h_1 \|_\infty  \dots \| h_m \|_\infty  \label{relative sqrt N bound}
\end{align}
where $N = d\Gamma(1)$ and  $a^{\#} \dots a^{\#}$ denotes any combination of $a$ and $a^*$. 
\end{lemma}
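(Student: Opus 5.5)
The plan is to reduce the bound to the elementary case $m=1$ and then iterate, using the way $a^{\#}$ shifts the particle number. Here $h_j\in L^\infty(\R^d;\hh)$ acts fiberwise on $L^2(\R^d;\FF)$: $(a^{\#}(h)u)(x)=a^{\#}(h(x))u(x)$. Since the number operator $N$ acts only in the $\FF$ factor and commutes with multiplication in $x$, it suffices to prove the estimate pointwise in $x$ with a constant uniform in $x$. The fiber statement is then
\[
\|a^{\#}(f_1)\cdots a^{\#}(f_m)(N+1)^{-m/2}\|_{\FF}\le C_m\|f_1\|_\hh\cdots\|f_m\|_\hh ,\qquad f_j=h_j(x).
\]
Taking the supremum over $x$ replaces $\|h_j(x)\|_\hh$ by $\|h_j\|_\infty$, and measurability causes no trouble because each fiber operator is bounded uniformly in $x$.

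For the fiber estimate I use the standard bounds on the $n$-particle sector:
\[
\|a(f)\psi^{(n)}\|\le\sqrt n\,\|f\|\,\|\psi^{(n)}\|,\qquad \|a^*(f)\psi^{(n)}\|\le\sqrt{n+1}\,\|f\|\,\|\psi^{(n)}\|.
\]
These follow directly from the definitions on symmetric tensors, since $a^*(f)$ is symmetrization of $f\otimes\psi$ times $\sqrt{n+1}$. Both are bounded by $\sqrt{n+1}\,\|f\|$. I also use the pull-through relations $a(f)\,g(N)=g(N+1)\,a(f)$ and $a^*(f)\,g(N)=g(N-1)\,a^*(f)$ on the relevant sectors, which hold because $a$ lowers and $a^*$ raises the particle number by one.

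The argument then runs on each sector $\psi\in\FF^{(n)}$. Applying $a^{\#}(f_m),\dots,a^{\#}(f_1)$ in turn, the intermediate vector always lies in a sector of particle number at most $n+m$. Each step therefore costs at most $\sqrt{n+m}\,\|f_j\|$, and the total is
\[
(n+m)^{m/2}\prod_j\|f_j\|\,\|\psi\|.
\]
Dividing by $(n+1)^{m/2}$ leaves the factor $\big(\tfrac{n+m}{n+1}\big)^{m/2}\le m^{m/2}$, so one may take $C_m=m^{m/2}$.

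To pass from single sectors to all of $\FF$, note that the operator $a^{\#}\cdots a^{\#}$ maps $\FF^{(n)}$ into $\FF^{(n+k)}$ for a fixed shift $k$ determined by the numbers of $a^*$ and $a$. Distinct $n$ are therefore sent to mutually orthogonal sectors. The sector bounds then sum in norm squared, giving the global bound on $\FF_{\rm fin}$, and the operator extends by density.

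The only point needing care is this orthogonality in the summation over sectors. The rest consists of the one-particle-sector estimates, which I take as standard.
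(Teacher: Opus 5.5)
Your proof is correct. The paper gives no argument of its own and simply refers to Lemma 17 of \cite{FGS01}, so you are supplying a self-contained version of the standard proof.

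Your route has three steps:
\begin{itemize}
\item[1.] Reduce to the fibre over $x$. This matters because the lemma is applied to $x$-dependent $G$ and $\mathrm{div}\,G$ acting in $L^2(\R^d;\FF)$.
\item[2.] Bound each $a^{\#}(f)$ by $\sqrt{k+1}\,\|f\|$ on the $k$-particle sector.
\item[3.] Sum over sectors, using that the product shifts particle number by a fixed amount, so images of distinct sectors are orthogonal.
\end{itemize}
Compared with the bare citation, this gives an explicit constant $C_m=m^{m/2}$. It also makes transparent the sector-wise estimate $\|a^{\#}(G)\chi^n u\|=O(\sqrt n)\,\|\chi^n u\|$ used in Steps 2 and 3 of the proof of \Cref{PF main thm}. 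The pull-through relations you list are not actually needed; the sector argument alone suffices.

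Two small points of precision.

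First, the vector to which the $j$-th factor (counting from the right) is applied lies in a sector of particle number at most $n+m-1$, not $n+m$. That is what produces the factor $\sqrt{n+m}$. Your phrasing would give $\sqrt{n+m+1}$ and $C_m=(m+1)^{m/2}$, which is equally good.

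Second, measurability of $x\mapsto a^{\#}(h(x))u(x)$ does not follow from uniform boundedness of the fibre operators. It follows from the measurability of $h$ together with continuity of $(f,\psi)\mapsto a^{\#}(f)\psi$ on $\hh\times\FF^{(k)}$. In any case it is already built into the definition of $a^{\#}(h)$ in assumption (iii).

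Finally, reading the lemma on vectors with values in $\FF_{\rm fin}$ and extending by density is enough for every use the paper makes of it.
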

For the proof see Lemma 17 in \cite{FGS01}. 



\subsection*{Counterexamples}

We give two examples here showing that the statement of \Cref{thm1} is wrong in general if either Hypothesis (c) or Hypothesis (d) is eliminated.

\medskip\noindent
\begin{itemize}
\item[(1)] Let $H=-\Delta$ in $\HH = L^2(\R_{+})$ and let $\DD = C_0^{\infty}(\R_{+})$.  Let $\chi_n(x)=\chi(x/n)$ where $\chi\in C^{\infty}(\R;[0,1])$ with $\chi(x)=1$ for $x\le 1$ and $\chi(x)=0$ if $x\ge 2$. Then Hypotheses (a), (b) and (d) are satisfied, but $H$ is not essentially self-adjoint. 
\item[(2)] Let $H=-\Delta$ in $\HH = L^2(I)$, where $I=(0,1)$, and let $\DD = C_0^{\infty}(I)$. Let $\chi_n\in C_0^{\infty}(I)$ with $\chi_n(x) = 1$ for $x\in [1/n,1-1/n]$ and $\supp(\chi_n)\subset [1/2n,1-1/2n]$. Then Hypothesis (a) and (b) are clearly satisfied. To prove (c) we notice that $D(H^{*}) = H^2(I)$. 
It follows that $\chi_n D(H^{*}) \subset H_0^2(I)\subset H_0^1(I) = Q(H)$. But $H$ is not essentially self-adjoint.
 \end{itemize}

\medskip
\noindent
\emph{Acknowledgement.} 
This work is supported by the German Research Foundation (DFG) under Grant GR 3213/4-1.


\end{document}